\DeclarePairedDelimiter\ave{\langle}{\rangle}
\newcommand{\cM}{\mathcal{M}}
\newcommand{\N}{\mathbb{N}}
\newcommand{\Prob}[1]{\operatorname{Prob}(#1)}
\newcommand{\R}{\mathbb{R}}
\newcommand{\cT}{\mathcal{T}}
\newcommand{\Z}{\mathbb{Z}}
\newtheorem{assumption}{Assumption}[section]
\newtheorem{proposition}[assumption]{Proposition}
\newtheorem{theorem}[assumption]{Theorem}
\newtheorem{lemma}[assumption]{Lemma}
\newtheorem{corollary}[assumption]{Corollary}
\theoremstyle{remark}\newtheorem{remark}[assumption]{Remark}
\title{Probabilistic modeling of car traffic accidents}
\author[1]{Simone G\"{o}ttlich}
\author[1]{Thomas Schillinger}
\author[2]{Andrea Tosin}
\affil[1]{{\footnotesize University of Mannheim, School of Business Informatics and Mathematics, 68159 Mannheim, Germany}}
\affil[2]{{\footnotesize Politecnico di Torino, Department of Mathematical Sciences ``G. L. Lagrange'', Torino, Italy}}
\date{}
\begin{document}
\maketitle
		
\begin{abstract}
We introduce a counting process to model the random occurrence in time of car traffic accidents, taking into account some aspects of the self-excitation typical of this phenomenon. By combining methods from probability and differential equations, we study this stochastic process in terms of its statistical moments and large-time trend. Moreover, we derive analytically the probability density functions of the times of occurrence of traffic accidents and of the time elapsing between two consecutive accidents. Finally, we demonstrate the suitability of our modelling approach by means of numerical simulations, which address also a comparison with real data of weekly trends of traffic accidents.

\medskip
			
\noindent{\bf Keywords:} traffic flow, random accidents, stochastic process, numerical simulations
			
\medskip
			
\noindent{\bf Mathematics Subject Classification:} 65C20, 76A30
\end{abstract}
		
\section{Introduction}
\label{sect:intro}
Understanding the occurrence of accidents is important in terms of increased travel time due to congestion, material damage to vehicles, and -- most importantly -- the safety of travellers. Although the number of road fatalities has decreased in most European countries over the last fifty years, road accidents are still a serious problem and have become an increasingly important issue in today's controversial topics. In particular, traffic accidents lead to congestion and are therefore a cause of stress and increased travel times, as well as of higher vehicle emissions.

There is a large variety of mathematical perspectives on traffic accidents. Considering traffic flow models based on first order hyperbolic conservation laws, traffic accidents were introduced in \cite{Goettlich.2020,Goettlich.2021,Goettlich.2024}. These works consider coupled models where traffic flow and traffic accidents influence each other in a bidirectional way. This idea was transferred to second order traffic dynamics in \cite{Chiarello.2023}. Using machine learning techniques, investigations on the prediction of traffic accidents \cite{garcia2018,Zhao2019}, the number of accidents on particular road segments \cite{Gataric.2023}, and also further accident-related parameters such as their severity \cite{Alhaek.2024} have been carried out. Further approaches involve Bayesian networks \cite{Mora2017,Zou2017} and kinetic models \cite{freguglia2017}.

\begin{figure}[!t]
    \centering
    \includegraphics[width=\linewidth]{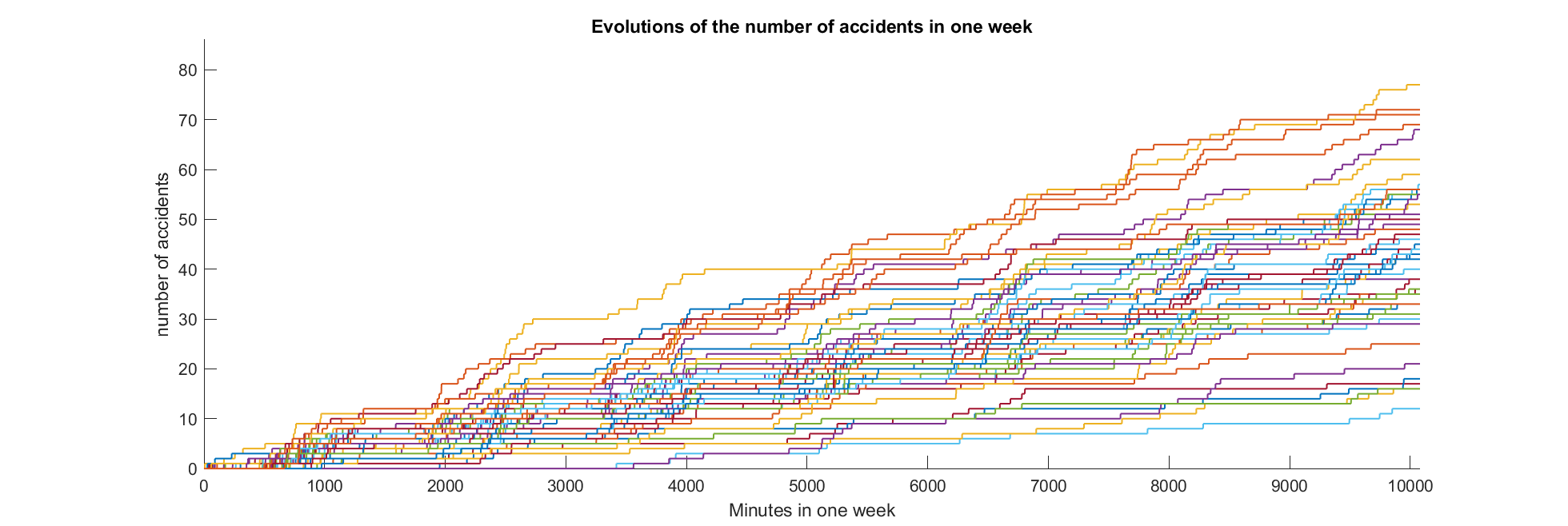}
    \caption{Weekly evolution in time of the number of accidents in an area in the UK for all weeks of 2020. For further details on the data, see Section~\ref{sec:realData}.}
    \label{fig:motivation}
\end{figure}

Conversely, a \textit{probabilistic modelling} of car traffic accidents is still largely missing. In order to motivate it, we observe in Figure~\ref{fig:motivation} some typical time trajectories of the number of accidents based on British accident data\footnote{\url{https://data.gov.uk/dataset/cb7ae6f0-4be6-4935-9277-47e5ce24a11f/road-safety-data} -- Accessed 11th September 2024.}. Each curve represents one week in 2020 and, as it is apparent from the plot, features a different growing trend with several jumps typical of a counting process. Instead of investigating the detailed causes of such trends, a probabilistic modelling approach should aim to explain \textit{aggregately} the origin of time trajectories qualitatively similar to those of Figure~\ref{fig:motivation} out of more \textit{elementary universal principles}. One of them is, for instance, the self-adaptation of the process based on previous accidents, in particular the fact that new accidents get more probable after an accident has occurred -- the so-called \textit{self-excitation} of the process.

To describe mathematically the jump intensity of counting processes like the one reported in Figure~\ref{fig:motivation}, different approaches exist in the literature. When considering the occurrence of traffic accidents from a statistical point of view, Poisson models are a frequent choice~\cite{Djauhari.2002,Gustavsson.1976}. Nevertheless, as data comparisons depict, their applicability is limited due to larger variances in real world data than those models predict (over-dispersion). In~\cite{Lord.2010,Sipos.2017} this issue is further elaborated and first approaches to bypass it are presented.

Building on~\cite{Goettlich.2024}, where the influence of previous accidents on the probability of further accidents was investigated using a Hawkes process, in this paper we follow instead the idea that, as mentioned before, accidents are driven by some self-excitation property. The latter, first considered for earthquakes \cite{Hawkes.1971}, has been investigated recently in the context of traffic accidents in~\cite{alaimo2024JABES,errais2022AOR,Kalair.2020,Motagi.2023}. We refer to~\cite{Li.2018} for a simulation-driven investigation of statistical properties and a maximum-likelihood analysis of accident parameters. However, since the Hawkes process has a jump intensity that typically decays rapidly in time, it is only useful for the description of short-term accident patterns and not for the explanation of the main effects depicted in Figure \ref{fig:motivation}.

To overcome this difficulty, while keeping an eye on the amenability of our model to insightful analytical investigations, we propose a stochastic counting process featuring a generalised version of the self-excitation property recalled above. In particular, its jump probability consists of a time-dependent background intensity supplemented by a term proportional to the cumulative number of accidents occurred up to the current time, the so-called self-excitation intensity. For both terms we consider prototypical assumptions and derive the corresponding characteristics of the stochastic process. Specifically, by combining methods from probability and differential equations in a spirit reminiscent of that of the kinetic theory, we estimate the large-time trend of the process, the behaviour of its statistical moments, and, under certain assumptions, we even provide an explicit expression of its law. Particularly relevant are moreover the time instants of occurrence of traffic accidents as well as the time elapsing between two consecutive accidents (interaccident time). For both quantities, which are random variables linked to the main counting process, we provide explicit probabilistic characterisations, showing that in general their distributions do not follow those typically postulated in the more standard probabilistic descriptions mentioned above. This is in line with the observed mismatch of e.g., Poisson models when employed to describe mathematically traffic accidents.

In more detail, the paper is structured as follows. Section~\ref{sect:model.accident_distr} introduces our stochastic process modelling the occurrence of car accidents and investigates the basic properties of its probability distribution. Making use of such results, Section~\ref{sect:acc_times} addresses the study of the distribution of accident and interaccident times. Next, for different choices of the background and self-excitation intensity functions complying with the theory previously developed, Section~\ref{sec:realData} addresses a real data study aimed at identifying proper choices of the intensity functions which make our model reproduce qualitatively the benchmark trends displayed in Figure~\ref{fig:motivation}. Finally, Section~\ref{sect:conclusions} summarises the main contributions of the paper.

\section{Modelling the accident distribution}
\label{sect:model.accident_distr}
Let us consider a stream of vehicles along a generic stretch of road. We denote by $\{N_t,\,t\geq 0\}$ the stochastic process counting the number of accidents over time; in particular, $N_t\in\N$ is the random variable yielding the cumulative number of accidents occurred up to time $t>0$, being $N_0=0$ the (deterministic) initial condition.

Upon introducing a small time step $0<\Delta{t}\ll 1$, we consider the following discrete-in-time evolution rule of $N_t$:
\begin{equation}
    N_{t+\Delta{t}}=N_t+H^{\Delta{t}}_t,
    \label{eq:Nt.evolution}
\end{equation}
where $H^{\Delta{t}}_t\in\N$ is a random variable counting the number of accidents which occur in the time interval $(t,\,t+\Delta{t}]$. We assume the following probabilistic model:
\begin{equation}
    \Prob{H^{\Delta{t}}_t=h}=
    \begin{cases}
        1-\lambda^\ast(t)\Delta{t}+o(\Delta{t}) & \text{if } h=0 \\
        \lambda^\ast(t)\Delta{t}+o(\Delta{t}) & \text{if } h=1 \\
        o(\Delta{t}) & \text{if } h\geq 2,
    \end{cases}
    \qquad h\in\N,
    \label{eq:Prob.H}
\end{equation}
where the function $\lambda^\ast:[0,\,+\infty)\to\R_+$, to be specified, is the \textit{intensity} of the process $\{N_t,\,t\geq 0\}$.

\begin{figure}[!t]
    \centering
    \includegraphics[width=.7\linewidth]{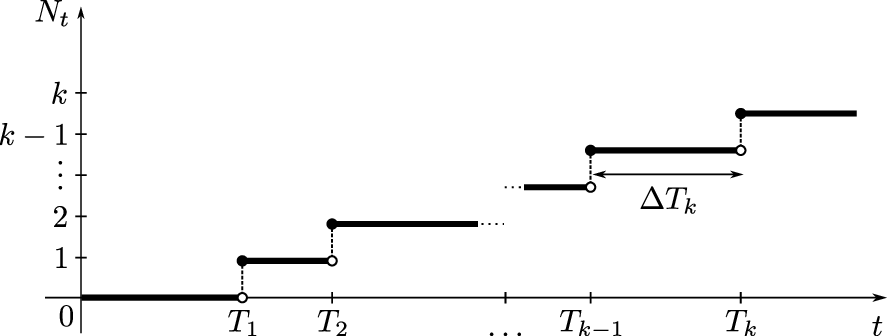}
    \caption{A prototypical trajectory of the process $\{N_t,\,t\geq 0\}$.}
    \label{fig:Nt}
\end{figure}

The process $\{N_t,\,t\geq 0\}$ just defined belongs to the class of the \textit{counting processes}. Figure~\ref{fig:Nt} displays a prototypical trajectory of it, which is piecewise constant in the set of positive integers. The process jumps from one positive integer to the next one at some random times $\{T_k\}_{k\geq 1}$, $0<T_1<T_2<\ldots<T_{k-1}<T_k<\ldots$, called \textit{jump} (or \textit{arrival}) \textit{times}. The time between two consecutive jump times, during which the process remains in a specific state (integer value), is called a \textit{holding} (or \textit{interarrival}) \textit{time} and is defined as $\Delta{T}_k:=T_k-T_{k-1}$. Typical examples of counting processes are the \textit{renewal processes}, which feature independent and identically distributed holding times. For example, if in~\eqref{eq:Prob.H} one considers a constant intensity $\lambda^\ast>0$ one obtains a renewal process $\{N_t,\,t\geq 0\}$ whose holding times are independent and exponentially distributed with parameter $\lambda^\ast$, i.e. a \textit{Poisson process} (cf. Section~\ref{sect:aggregateInterTimes.constant_lambda_mu} in the Supplementary Material).

In our case, however, we need to consider more elaborated forms of the intensity $\lambda^\ast$ to include in the model the \textit{self-excitation} property of car accidents mentioned in Section~\ref{sect:intro}. This typically destroys the independence and identical distribution of the holding times, so that the resulting counting process $\{N_t,\,t\geq 0\}$ is no longer a renewal process and needs therefore to be analysed in a dedicated manner.

In applications ranging e.g., from seismology to epidemiology and mathematical finance the prototype of self-exciting stochastic processes is the \textit{Hawkes process}~\cite{Hawkes.1971}, in which the intensity $\lambda^\ast(t)$ depends on the history of the process $\{N_s,\,s\geq 0\}$ up to time $t$ as follows:
\begin{equation}
    \lambda^\ast(t)=\lambda(t)+\int_0^t\mu(t-s)\,dN_s,
    \label{eq:lambda*.Hawkes}
\end{equation}
where $\lambda,\,\mu:[0,\,+\infty)\to\R_+$ are the so-called \textit{background} and \textit{self-excitation intensities}, respectively. The integral appearing in~\eqref{eq:lambda*.Hawkes} is a stochastic integral with respect to the process $\{N_t,\,t\geq 0\}$. Using the fact that the latter is piecewise constant with jumps of unitary amplitude, the intensity $\lambda^\ast$ may be written equivalently as
\begin{equation}
    \lambda^\ast(t)=\lambda(t)+\sum_{k\,:\,T_k\leq t}\mu(t-T_k)
    \label{eq:lambda_ast.acc_times}
\end{equation}
at least for a continuous self-excitation intensity function $\mu$. In particular, if $\mu$ is constant we obtain the special form
\begin{equation}
    \lambda^\ast(t)=\lambda(t)+\mu N_t.
    \label{eq:Hawkes.mu_const}
\end{equation}

In order to bypass the complications of a non-local (in time) probabilistic model like~\eqref{eq:lambda*.Hawkes}, taking inspiration from~\eqref{eq:Hawkes.mu_const} here we define instead
\begin{equation}
    \lambda^\ast(t)=\lambda(t)+\mu(t)N_t,
    \label{eq:our_lambda*}
\end{equation}
so as to allow for a variable self-excitation intensity but with a local-in-time effect.

\begin{remark} \label{rem:generalised_Hawkes}
Some authors, cf. e.g.,~\cite{roueff2016SPA}, generalise the definition~\eqref{eq:lambda*.Hawkes} of the intensity $\lambda^\ast$ of a Hawkes process by admitting that the self-excitation intensity $\mu$ may be \textit{time-varying}, i.e.,
$$ \lambda^\ast(t)=\lambda(t)+\int_0^t\mu(t-s;\,t)\,dN_s. $$
This assumption corresponds to saying that the self-excitation does not only depend on the time elapsed since each event but also on time-varying contextual characteristics of the phenomenon under consideration. From an alternative point of view, the self-excitation intensity is \textit{parameterised} by $t$, which may be understood as a way to model the influence over time of the background on the \textit{type} of self-excitation. Within this framework, we observe that if $\mu$ is constant with respect to the elapsed time $t-s$ one recovers exactly our model~\eqref{eq:our_lambda*}. Therefore, the latter may be possibly regarded as a particular instance of a (generalised) Hawkes process, when the self-excitation is mainly dictated by the background.
\end{remark}

On the whole, \eqref{eq:Nt.evolution},~\eqref{eq:Prob.H},~\eqref{eq:our_lambda*} fully characterise our discrete-in-time stochastic accident model.

\begin{remark} \label{rem:physics.lambda*}
From the physical point of view,~\eqref{eq:our_lambda*} models a system which accumulates ``energy'', viz. \textit{accident risk}, at each accident event without ever releasing it, as it will become apparent in Remark~\ref{rem:energy}. In reality, such an energy should be released eventually, a feature which is not caught by~\eqref{eq:our_lambda*}. Therefore, one should expect that such an ``accumulation-without-release'' property may be appropriate to describe a sequence of traffic accidents on a time scale compatible with a \textit{limited} physical time horizon. As we will ascertain with the computational analysis of Section~\ref{sec:realData}, this is indeed true for medium-size time horizons of the order of weeks.

We stress that, in general, it is reasonable to study traffic accidents over limited time horizons. Indeed, the physics of traffic is much less established than that of other physical phenomena, such as those mentioned before, whose study may instead be extended to larger time horizons because it may be grounded on a sounder understanding of the underlying physical principles. From this perspective, a medium-size time horizon can foster the meaningfulness of the assumption that $\mu$ is determined essentially by the background, being instead constant with respect to the time elapsed since each accident (cf. Remark~\ref{rem:generalised_Hawkes}).
\end{remark}

Now, let
\begin{align}
f(n,t):=\Prob{N_t=n}, \qquad n\in\N.
\label{eq:distributionFunction}
\end{align}
From the stochastic model~\eqref{eq:Nt.evolution},~\eqref{eq:Prob.H},~\eqref{eq:our_lambda*} we may obtain an evolution equation for $f$ in the continuous time limit $\Delta{t}\to 0^+$ by invoking classical arguments of the kinetic theory, cf. e.g.,~\cite{pareschi2013BOOK}. To this purpose, let $\varphi:\N\to\R$ be an arbitrary test function and let us denote by $\ave{\cdot}$ the expectation operator. Taking the expectation of the quantity $\varphi(N_{t+\Delta{t}})$ we get, from~\eqref{eq:Nt.evolution},~\eqref{eq:Prob.H},
$$ \ave{\varphi(N_{t+\Delta{t}})}=\ave{\varphi(N_t+H^{\Delta{t}}_t)}
    =\ave{\varphi(N_t)}+\ave*{\lambda^\ast(t)\bigl(\varphi(N_t+1)-\varphi(N_t)\bigr)}\Delta{t}+o(\Delta{t}), $$
whence
$$ \frac{\ave{\varphi(N_{t+\Delta{t}})}-\ave{\varphi(N_t)}}{\Delta{t}}=
    \ave*{\lambda^\ast(t)\bigl(\varphi(N_t+1)-\varphi(N_t)\bigr)}+o(1) $$
and formally, in the limit $\Delta{t}\to 0^+$,
$$ \frac{d}{dt}\ave{\varphi(N_t)}=\ave{\lambda^\ast(t)\bigl(\varphi(N_t+1)-\varphi(N_t)\bigr)}. $$
Using $f$ to compute the remaining expectations we find further
\begin{equation}
    \frac{d}{dt}\sum_{i=0}^\infty\varphi(i)f(i,t)=\sum_{i=0}^\infty\left(\lambda(t)+\mu(t)i\right)\bigl(\varphi(i+1)-\varphi(i)\bigr)f(i,t),
    \label{eq:Boltz_f.weak}
\end{equation}
where we have used~\eqref{eq:our_lambda*} to express $\lambda^\ast$ in terms of $\lambda$, $\mu$. This equation is required to hold for every test function $\varphi$, therefore it can be considered the \textit{weak form} of the evolution equation for the distribution function $f$.

Recovering the corresponding \textit{strong form} is a matter of choosing $\varphi$ conveniently. In particular, let us fix $n\in\N$ and let us consider a test function $\varphi$ such that $\varphi(n)=1$ while $\varphi(i)=0$ whenever $i\neq n$. Then from~\eqref{eq:Boltz_f.weak} we get
\begin{equation}
    \partial_tf(n,t)=\bigl(\lambda(t)+\mu(t)(n-1)\bigr)f(n-1,t)-\bigl(\lambda(t)+\mu(t)n\bigr)f(n,t)
    \label{eq:Boltz_f.strong}
\end{equation}
for $n=0,\,1,\,2,\,\dots$, which provides a set of evolution equations for the probabilities $\{f(n,t),\,n\in\N\}$, i.e. the law of $N_t$ for each $t\geq 0$.

Assuming that at the initial time no accidents have occurred yet, the natural initial condition for~\eqref{eq:Boltz_f.strong} at $t=0$ is
\begin{equation}
    f(n,0)=
        \begin{cases}
            1 & \text{if } n=0 \\
            0 & \text{if } n>0,
        \end{cases}
    \label{eq:f.init_cond}
\end{equation}
which corresponds to $N_0$ having a law given by $\delta_0(n)$.

\begin{remark} \label{rem:f.n_neg}
For subsequent developments, it may be convenient to understand $f(n,t)$ as defined formally also for negative integers $n$, hence on the whole $\Z$, with $f(n,t)=0$ for all $n<0$ and all $t\geq 0$. Notice that~\eqref{eq:Boltz_f.strong} evaluated at $n<0$ is consistent with this interpretation, indeed it returns $\partial_tf(n,t)=0$ for all $n<0$.
\end{remark}

\subsection{Basic properties of the distribution function~\texorpdfstring{$\boldsymbol{f}$}{}} \label{sect:basic_prop.f}
We begin by investigating some elementary properties of the distribution function $f$, which can be deduced directly from the initial-value problem~\eqref{eq:Boltz_f.strong}-\eqref{eq:f.init_cond}.

Let us define
\begin{equation}
    \Lambda(t):=\int_0^t\lambda(s)\,ds, \qquad M(t):=\int_0^t\mu(s)\,ds.
    \label{eq:Lambda_M}
\end{equation}

\begin{assumption} \label{ass:lambda_mu_bounded}
We assume that $\lambda$, $\mu$ are bounded in every interval $[0,\,t]$, $t>0$. Then so are $\Lambda$, $M$.
\end{assumption}

Notice that, owing to Assumption~\ref{ass:lambda_mu_bounded}, the mappings $t\mapsto\Lambda(t)$, $t\mapsto M(t)$ are continuous. Moreover, since $\lambda$, $\mu$ are non-negative, they are also non-decreasing.

From~\eqref{eq:Boltz_f.strong}, for $n=0$ and taking Remark~\ref{rem:f.n_neg} into account we get $\partial_tf(0,t)=-\lambda(t)f(0,t)$, whence, owing to~\eqref{eq:f.init_cond},
\begin{equation}
    f(0,t)=e^{-\Lambda(t)}.
    \label{eq:f(0,t)}
\end{equation}
For $n>0$, rewriting~\eqref{eq:Boltz_f.strong} as
$$ \partial_t\!\left(e^{\Lambda(t)+M(t)n}f(n,t)\right)=e^{\Lambda(t)+M(t)n}\bigl(\lambda(t)+\mu(t)(n-1)\bigr)f(n-1,t) $$
and integrating over $[0,\,t]$, $t>0$, we obtain instead the recursive relationship
\begin{equation}
    f(n,t)=e^{-(\Lambda(t)+M(t)n)}\int_0^te^{\Lambda(s)+M(s)n}\bigl(\lambda(s)+\mu(s)(n-1)\bigr)f(n-1,s)\,ds,
    \label{eq:f.recurrence}
\end{equation}
$n=1,\,2,\dots$, whence we may prove that~\eqref{eq:Boltz_f.strong} is consistent with its solutions being probability distributions:
\begin{proposition} \label{prop:f.prob_distr}
Assuming~\eqref{eq:f.init_cond}, it results
$$ 0\leq f(n,t)\leq 1, \qquad \sum_{i=0}^\infty f(i,t)=1 $$
for all $n\in\N$ and all $t>0$.
\end{proposition}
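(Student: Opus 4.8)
The plan is to establish the three claims in the order: non-negativity of each $f(n,t)$, convergence of the series with $\sum_i f(i,t)\le 1$, and then the matching lower bound giving equality, from which the upper bound $f(n,t)\le 1$ is immediate.

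First I would prove $f(n,t)\ge 0$ by induction on $n$. The base case $n=0$ follows at once from \eqref{eq:f(0,t)}, since $f(0,t)=e^{-\Lambda(t)}>0$. For the inductive step I would invoke the recurrence \eqref{eq:f.recurrence}: the exponential prefactors are strictly positive, and for every $n\ge 1$ the coefficient $\lambda(s)+\mu(s)(n-1)$ is non-negative because $\lambda,\mu\ge 0$ and $n-1\ge 0$; hence if $f(n-1,\cdot)\ge 0$ the integrand is non-negative and so is $f(n,t)$.

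For the normalisation I would work with the partial sums $S_N(t):=\sum_{n=0}^N f(n,t)$. Summing the strong form \eqref{eq:Boltz_f.strong} over $n=0,\dots,N$ produces a telescoping series — here Remark~\ref{rem:f.n_neg} is used to set the $n=-1$ boundary term to zero — leaving $\frac{d}{dt}S_N(t)=-(\lambda(t)+\mu(t)N)f(N,t)\le 0$. Since $S_N(0)=1$ by \eqref{eq:f.init_cond}, this gives $S_N(t)\le 1$ for every $N$ and $t$; as the partial sums are non-decreasing in $N$ and bounded, the series $S(t):=\sum_{i=0}^\infty f(i,t)$ converges with $S(t)\le 1$. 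Combined with non-negativity this already yields $0\le f(n,t)\le S(t)\le 1$.

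The hard part is upgrading $S(t)\le 1$ to $S(t)=1$, i.e.\ ruling out loss of mass to infinity (no explosion). Integrating the identity above gives $1-S_N(t)=\int_0^t(\lambda(s)+\mu(s)N)f(N,s)\,ds\ge 0$, so it remains to show the right-hand side vanishes as $N\to\infty$. To this end I would first obtain a uniform-in-$N$ bound on the truncated first moment $m_N(t):=\sum_{n=0}^N n\,f(n,t)$: summing \eqref{eq:Boltz_f.strong} weighted by $n$, telescoping, and discarding the non-positive boundary term yields the differential inequality $\frac{d}{dt}m_N(t)\le \lambda(t)+\mu(t)m_N(t)$ (using $S(t)\le 1$); Gronwall's lemma together with $m_N(0)=0$ then bounds $m_N(t)$ by $\bar m(t):=e^{M(t)}\int_0^t\lambda(s)e^{-M(s)}\,ds$, finite by Assumption~\ref{ass:lambda_mu_bounded} and \eqref{eq:Lambda_M}. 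Consequently the full first moment is finite, $N f(N,s)\le \bar m(s)$, and $N f(N,s)\to 0$ pointwise as the tail term of a convergent series. Since $\lambda,\mu$ are bounded and $\bar m$ is integrable on $[0,t]$, dominated convergence gives $\int_0^t(\lambda(s)+\mu(s)N)f(N,s)\,ds\to 0$, whence $S(t)=1$. The upper bound $f(n,t)\le \sum_i f(i,t)=1$ is then immediate. I expect the a priori moment bound feeding the dominated-convergence step to be the crux, since the formal computation (equivalently, testing \eqref{eq:Boltz_f.weak} with $\varphi\equiv 1$) suggests $\frac{d}{dt}S=0$ at once but hides exactly this vanishing-tail justification.
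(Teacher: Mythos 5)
Your proof is correct, and for the normalisation it takes a genuinely different — and more careful — route than the paper. The non-negativity argument is the same (induction via the recurrence \eqref{eq:f.recurrence}). For the upper bound $f(n,t)\leq 1$, the paper instead performs a direct induction: it inserts the inductive hypothesis $f(n-1,\cdot)\leq 1$ into \eqref{eq:f.recurrence} and evaluates the resulting integral explicitly (recognising $\lambda(s)+\mu(s)(n-1)$ as $\frac{d}{ds}\bigl(\Lambda(s)+M(s)n\bigr)-\mu(s)$), whereas you obtain the bound for free from $S_N(t)\leq 1$ together with non-negativity — arguably cleaner. The real divergence is in proving $\sum_i f(i,t)=1$: the paper simply tests the weak form \eqref{eq:Boltz_f.weak} with $\varphi\equiv 1$ to get $\frac{d}{dt}\sum_i f(i,t)=0$ in one line, which tacitly assumes the interchange of summation and differentiation and the vanishing of the telescoped boundary term $(\lambda(t)+\mu(t)N)f(N,t)$ as $N\to\infty$ — precisely the conservation-of-mass (no-explosion) issue you identify. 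Your partial-sum identity $1-S_N(t)=\int_0^t(\lambda(s)+\mu(s)N)f(N,s)\,ds$, the Gronwall bound on the truncated first moment, and the dominated-convergence passage to the limit supply exactly the justification the paper omits; note also that your a priori bound $\bar m(t)$ coincides with the mean \eqref{eq:m.closedForm} the paper later derives formally. In short: the paper's argument is shorter but formal; yours is longer but closes a genuine gap, at the modest cost of the extra moment estimate.
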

\begin{proof}
First, we show that $0\leq f(n,t)\leq 1$ for all $n\in\N$ and all $t>0$. For $n=0$, this is evident from~\eqref{eq:f(0,t)}. Assume now that for a certain $n>0$ we have $0\leq f(n-1,t)\leq 1$ for all $t>0$. Then from~\eqref{eq:f.recurrence} we see straightforwardly that $f(n,t)\geq 0$ for all $t>0$ and moreover:
\begin{align*}
    f(n,t) &\leq e^{-(\Lambda(t)+M(t)n)}\int_0^te^{\Lambda(s)+M(s)n}\bigl(\lambda(s)+\mu(s)(n-1)\bigr)\,ds \\
    &= e^{-(\Lambda(t)+M(t)n)}\left[\left(e^{\Lambda(s)+M(s)n}\right\vert_0^t-\int_0^te^{\Lambda(s)+M(s)n}\mu(s)\,ds\right] \\
\intertext{whence, considering that $\Lambda(0)=M(0)=0$ by definition, cf.~\eqref{eq:Lambda_M},}
    &= 1-e^{-(\Lambda(t)+M(t)n)}\left[1+\int_0^te^{\Lambda(s)+M(s)n}\mu(s)\,ds\right]\leq 1.
\end{align*}
The thesis follows then by induction on $n$.

Second, we show that $\sum_{i=0}^\infty f(i,t)=1$ for all $t>0$. From~\eqref{eq:Boltz_f.weak} with $\varphi\equiv 1$ we discover $\frac{d}{dt}\sum_{i=0}^\infty f(i,t)=0$, whence the thesis follows owing to~\eqref{eq:f.init_cond}.
\end{proof}

We may take advantage of the weak form~\eqref{eq:Boltz_f.weak} of the evolution equation for $f$ to study some representative statistical quantities of the process $\{N_t,\,t\geq 0\}$.

Let
\begin{equation}
    m(t):=\sum_{n=0}^\infty nf(n,t)
    \label{eq:m}
\end{equation}
be the \textit{mean number of accidents} at time $t$. Choosing $\varphi(i)=i$ in~\eqref{eq:Boltz_f.weak} we obtain that $m$ satisfies the equation
\begin{equation}
    \frac{dm}{dt}=\lambda(t)+\mu(t)m
    \label{eq:m.ODE}
\end{equation}
complemented with the initial condition $m(0)=0$, cf.~\eqref{eq:f.init_cond}, whence we determine explicitly
\begin{align}
    m(t)=e^{M(t)}\int_0^te^{-M(s)}\lambda(s)\,ds.
    \label{eq:m.closedForm}
\end{align}
Owing to Assumption~\ref{ass:lambda_mu_bounded}, it results $m(t)<+\infty$ for all $t>0$. Moreover, if $\lambda,\,\mu\in L^1(\R_+)$ then, letting
\begin{equation}
    \Lambda_\infty:=\int_0^{+\infty}\lambda(t)\,dt, \qquad M_\infty:=\int_0^{+\infty}\mu(t)\,dt,
    \label{eq:Lambda_M.inf}
\end{equation}
we obtain
\begin{equation}
    m(t)\xrightarrow{t\to+\infty}m_\infty:=e^{M_\infty}\int_0^{+\infty}e^{-M(s)}\lambda(s)\,ds.
    \label{eq:m_inf}
\end{equation}
Since $e^{-M_\infty}<e^{-M(s)}\leq 1$ for all $s>0$, we observe that
\begin{equation}
    \Lambda_\infty\leq m_\infty\leq e^{M_\infty}\Lambda_\infty,
    \label{eq:m_inf.bounds}
\end{equation}
thus $m_\infty$ is finite and in general non-zero.

Let now
$$ E(t):=\sum_{n=0}^\infty n^2f(n,t) $$
be the \textit{energy} of the process $\{N_t,\,t\geq 0\}$ at time $t$. With $\varphi(i)=i^2$ in~\eqref{eq:Boltz_f.weak} we get that its evolution is ruled by the equation
\begin{align}
\label{eq:E.ode}
\frac{dE}{dt}=2\mu(t)E+\bigl(2\lambda(t)+\mu(t)\bigr)m+\lambda(t)    
\end{align}
with initial condition $E(0)=0$, cf.~\eqref{eq:f.init_cond}, whence
\begin{align}
\label{eq:E.closedForm}
E(t)=e^{2M(t)}\int_0^te^{-2M(s)}\bigl[\bigl(2\lambda(s)+\mu(s)\bigr)m(s)+\lambda(s)\bigr]\,ds.    
\end{align}

\begin{remark} \label{rem:energy}
From~\eqref{eq:E.ode}, owing to the non-negativity of $\lambda$, $\mu$, and $m$, we see that
$$ \frac{dE}{dt}\geq 2\mu(t)E, $$
therefore the self-excitation intensity $\mu$ estimates from below the rate of accumulation of the energy of the process $\{N_t,\,t\geq 0\}$. From this inequality we immediately see that $E$ is non-decreasing in time (notice that $E$ is non-negative by definition), which confirms the ``accumulation-without-release'' property mentioned in Remark~\ref{rem:physics.lambda*}.
\end{remark}

If we consider again the case $\lambda,\,\mu\in L^1(\R_+)$ we may estimate
\begin{align}
    \begin{aligned}[t]
        E(t)\xrightarrow{t\to+\infty}E_\infty &:= e^{2M_\infty}\int_0^{+\infty}e^{-2M(s)}\bigl[\bigl(2\lambda(s)+\mu(s)\bigr)m(s)+\lambda(s)\bigr]\,ds \\
        &\leq e^{2M_\infty}[(2\Lambda_\infty+M_\infty)m_\infty+\Lambda_\infty] \\
        &\leq e^{2M_\infty}[(2\Lambda_\infty+M_\infty)e^{M_\infty}+1]\Lambda_\infty,
    \end{aligned}
    \label{eq:E_inf}
\end{align}
where we have used that $m(t)\leq m_\infty$ for all $t>0$ (because $m(t)$ is non-decreasing, indeed from~\eqref{eq:m.ODE} it results $\frac{dm}{dt}\geq 0$) along with the upper bound in~\eqref{eq:m_inf.bounds}. Therefore, the energy is asymptotically finite. Moreover, plugging~\eqref{eq:m.closedForm},~\eqref{eq:m_inf} into~\eqref{eq:E_inf} and using again repeatedly $e^{-M_\infty}<e^{-M(s)}$ for all $s\geq 0$ we obtain
\begin{align}
    \resizebox{.9\linewidth}{!}{$
    \begin{aligned}[t]
        E_\infty &\geq e^{2M_\infty}\int_0^{+\infty}e^{-2M(s)}\bigl(2\lambda(s)+\mu(s)\bigr)\left(e^{M(s)}
            \int_0^s e^{-M(r)}\lambda(r)\,dr\right)ds+e^{M_\infty}m_\infty \\
        &\geq \int_0^{+\infty}(2\lambda(s)+\mu(s))\Lambda(s)\,ds+e^{M_\infty}\Lambda_\infty\geq\Lambda_\infty^2+e^{M_\infty}\Lambda_\infty.
    \end{aligned}
    $}
    \label{eq:E_inf.low_bound}
\end{align}

The finiteness of both $m_\infty$ and $E_\infty$ implies that the \textit{internal energy}, viz. the \textit{variance}, of the distribution $f$, defined as $\sigma_\infty^2:=E_\infty-m_\infty^2$, is asymptotically finite. In addition to this,~\eqref{eq:m_inf.bounds} and~\eqref{eq:E_inf.low_bound} entail
\begin{equation}
    \sigma_\infty^2\geq\left(\Lambda_\infty+e^{M_\infty}-e^{2M_\infty}\Lambda_\infty\right)\Lambda_\infty.
    \label{eq:sigma_inf^2.low_bound}
\end{equation}
Since $0<\sigma_\infty^2<+\infty$, the asymptotic profile towards which $f$ may converge in time neither shrinks on a single value of $n$ (the case $\sigma_\infty^2=0$) nor spreads on the whole $\N$ (the case $\sigma_\infty^2=+\infty$). Therefore, the process $\{N_t,\,t\geq 0\}$ may display interesting asymptotics.

From~\eqref{eq:sigma_inf^2.low_bound}, assuming $\Lambda_\infty>0$ (which amounts to excluding the case $\lambda\equiv 0$), we see that a sufficient condition for $\sigma_\infty^2>0$ is $(1-e^{2M_\infty})\Lambda_\infty+e^{M_\infty}>0$, namely $\Lambda_\infty<\frac{e^{M_\infty}}{e^{2M_\infty}-1}$. Hence, for fixed $M_\infty\geq 0$, if $\Lambda_\infty$ is sufficiently small a non-vanishing asymptotic internal energy may be guaranteed. Notice that if $M_\infty=0$, i.e. if $\mu\equiv 0$ (no self-excitation), then $\sigma_\infty^2>0$ for every $\Lambda_\infty\geq 0$.

\subsection{Estimates on~\texorpdfstring{$\boldsymbol{f}$}{} and characterisation of the tail}
\label{sec:f.bounds}
Besides~\eqref{eq:f(0,t)}, from the recursive relationship~\eqref{eq:f.recurrence} it is in general hard to obtain an explicit expression of $f(n,t)$ for $n\geq 1$ and $t>0$. Nonetheless, we can establish a useful estimate, which in some regimes depicts quite accurately the trend of $f$.
\begin{theorem} \label{theo:f.bound}
It results
\begin{equation}
    f(n,t)\leq\frac{e^{-\Lambda(t)}}{n!}\bigl(\Lambda(t)+M(t)(n-1)\bigr)^n
    \label{eq:f.bound}
\end{equation}
for every $n\in\N$, $n\geq 1$, and every $t>0$.
\end{theorem}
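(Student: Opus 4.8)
The plan is to proceed by induction on $n$, using the recursive relationship~\eqref{eq:f.recurrence} as the driving engine and exploiting the fact, noted right after Assumption~\ref{ass:lambda_mu_bounded}, that $M$ is non-decreasing.

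For the base case $n=1$, I would substitute $f(0,s)=e^{-\Lambda(s)}$ from~\eqref{eq:f(0,t)} into~\eqref{eq:f.recurrence}; the term $\mu(s)(n-1)$ vanishes and the two exponentials involving $\Lambda(s)$ cancel, leaving $f(1,t)=e^{-(\Lambda(t)+M(t))}\int_0^t e^{M(s)}\lambda(s)\,ds$. Bounding $e^{M(s)}\leq e^{M(t)}$ for $s\leq t$ by monotonicity of $M$ and pulling this factor out of the integral yields $f(1,t)\leq e^{-\Lambda(t)}\Lambda(t)$, which is exactly~\eqref{eq:f.bound} at $n=1$.

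For the inductive step I would assume~\eqref{eq:f.bound} at level $n-1$ and insert it into~\eqref{eq:f.recurrence}. The crucial observation is that the prefactor $\lambda(s)+\mu(s)(n-1)$ in the integrand is precisely the derivative of $g(s):=\Lambda(s)+M(s)(n-1)$. Two monotonicity-type estimates then collapse the integral: first, $e^{M(s)n}\leq e^{M(t)n}$ cancels the leading factor $e^{-M(t)n}$; second, since $\Lambda(s)+M(s)(n-2)\leq g(s)$ (all terms being non-negative, which requires $n\geq 2$ so that $n-2\geq 0$), the $(n-1)$-th power coming from the hypothesis is controlled by $g(s)^{n-1}$. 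After these reductions the surviving integral is $\int_0^t g'(s)g(s)^{n-1}\,ds$, which evaluates exactly to $g(t)^n/n$ because $g(0)=\Lambda(0)+M(0)(n-1)=0$. Combining the $1/(n-1)!$ from the hypothesis with this extra $1/n$ produces $\tfrac{e^{-\Lambda(t)}}{n!}g(t)^n=\tfrac{e^{-\Lambda(t)}}{n!}\bigl(\Lambda(t)+M(t)(n-1)\bigr)^n$, closing the induction.

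The main obstacle is spotting the derivative structure $g'(s)=\lambda(s)+\mu(s)(n-1)$ that lets the integral telescope; once that is recognised, everything else is routine. The only subtlety I would flag explicitly is that the power-bound step $\Lambda+M(n-2)\leq g$ relies on $n-2\geq 0$, which is exactly why the case $n=1$ must be treated separately as the base of the induction.
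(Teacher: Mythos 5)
Your proposal is correct and follows essentially the same route as the paper's proof: induction starting from the explicit $n=1$ case, monotonicity of $M$ to absorb the exponential factors, enlarging the power base from $\Lambda+M(n-2)$ to $\Lambda+M(n-1)$, and recognising that the prefactor in~\eqref{eq:f.recurrence} is the exact derivative of that base so the integral evaluates to $g(t)^n/n$. The only difference is cosmetic (you induct from $n-1$ to $n$ while the paper goes from $n$ to $n+1$), and your remark on why $n=1$ must be the base case is a fair, if minor, refinement.
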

\begin{proof}
We proceed by induction on $n$.

Using~\eqref{eq:f(0,t)}, from~\eqref{eq:f.recurrence} we compute explicitly
$$ f(1,t)=e^{-(\Lambda(t)+M(t))}\int_0^t e^{M(s)}\lambda(s)\,ds\leq e^{-\Lambda(t)}\int_0^t\lambda(s)\,ds=\Lambda(t)e^{-\Lambda(t)}, $$
where we have used that $M$ is non-decreasing, thus $M(s)\leq M(t)$ for all $s\leq t$. This proves that~\eqref{eq:f.bound} holds for $n=1$.

We assume now that~\eqref{eq:f.bound} holds for a certain $n\geq 1$ and test it for $n+1$:
\begin{align*}
    f(n+1,t) &= e^{-(\Lambda(t)+M(t)(n+1))}\int_0^te^{\Lambda(s)+M(s)(n+1)}(\lambda(s)+\mu(s)n)f(n,s)\,ds \\
    &\leq \frac{e^{-\Lambda(t)}}{n!}\int_0^t(\lambda(s)+\mu(s)n)\bigl(\Lambda(s)+M(s)n\bigr)^n\,ds \\
    &= \frac{e^{-\Lambda(t)}}{n!}\int_0^t\frac{1}{n+1}\cdot\frac{d}{ds}\bigl(\Lambda(s)+M(s)n\bigr)^{n+1}\,ds \\
    &= \frac{e^{-\Lambda(t)}}{(n+1)!}\bigl(\Lambda(t)+M(t)n\bigr)^{n+1}.
\end{align*}

We conclude that~\eqref{eq:f.bound} holds for every $n\in\N$, $n\geq 1$, and every $t>0$.
\end{proof}

From Theorem~\ref{theo:f.bound} we may characterise the tail of the distribution $f(\cdot,t)$ at every $t>0$, i.e. the trend of $f(n,t)$ for $n$ large (and $t$ fixed). By Stirling's formula we have $n!\sim\sqrt{2\pi n}\left(\frac{n}{e}\right)^n$ when $n\to\infty$, whence
\begin{align}
    f(n,t)\leq\frac{e^{-\Lambda(t)}}{n!}\bigl(\Lambda(t)+M(t)(n-1)\bigr)^n\sim\frac{e^{-\Lambda(t)}}{\sqrt{2\pi}}\cdot\frac{(eM(t))^n}{\sqrt{n}}
    \label{eq:bound.f.Stirling}
\end{align} 
for $n$ sufficiently large. Since $M(t)$ is non-decreasing and $M(0)=0$, the equation $eM(t)=1$ may admit a solution. If this is the case, we denote $t^\ast:=\min\{t>0\,:\,eM(t)=1\}$. Then, for every $t<t^\ast$ the distribution $f(\cdot,t)$ decays quicker than exponentially when $n\to\infty$, thereby exhibiting a \textit{slim} tail. For $t=t^\ast$, it decays at least like $\frac{1}{\sqrt{n}}$. Finally, for $t>t^\ast$, if $M(t)$ is not constantly equal to $\frac{1}{e}$ the asymptotic estimate~\eqref{eq:bound.f.Stirling} does not allow us to predict the shape of the tail for $n$ large, because $\frac{(eM(t))^n}{\sqrt{n}}\to+\infty$ when $n\to\infty$. Conversely, if $M(t)<\frac{1}{e}$ for every $t\geq 0$ then $t^\ast$ does not exist and at all times $f(\cdot,t)$ features a slim tail analytically estimated by~\eqref{eq:bound.f.Stirling}.

On the whole, the ultimate characterisation of the tail of $f(\cdot,t)$ may be obtained as a consequence of the following fact:
\begin{theorem} \label{theo:moments_bounded}
Assume $\lambda,\,\mu\in L^1(\R_+)$. Then $f$ has statistical moments of any order uniformly bounded in time.
\end{theorem}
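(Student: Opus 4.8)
The plan is to set up a closed hierarchy of differential (in)equalities for the moments
$$ M_p(t):=\sum_{n=0}^\infty n^pf(n,t), \qquad p\in\N, $$
and to bound them one after another by induction on $p$, exploiting $\lambda,\,\mu\in L^1(\R_+)$ to make every bound uniform in $t$. The goal is to show $\sup_{t>0}M_p(t)<+\infty$ for each $p$.

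First I would take $\varphi(i)=i^p$ in the weak form~\eqref{eq:Boltz_f.weak}. The crucial algebraic observation is that $(i+1)^p-i^p=p\,i^{p-1}+(\text{lower order in }i)$, so that $(\lambda(t)+\mu(t)i)\bigl((i+1)^p-i^p\bigr)$ is a polynomial in $i$ of degree exactly $p$ whose \emph{only} degree-$p$ term is $p\mu(t)i^p$. Summing against $f$ yields the hierarchy
$$ \frac{dM_p}{dt}=p\mu(t)M_p+\sum_{j=0}^{p-1}c_{p,j}(t)M_j, $$
where each coefficient has the form $c_{p,j}(t)=a_{p,j}\lambda(t)+b_{p,j}\mu(t)$ with non-negative combinatorial constants. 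The point is that the top-order moment enters only through the benign linear term $p\mu(t)M_p$, whose integrating factor is $e^{-pM(t)}$; since $0\leq M(t)\leq M_\infty<+\infty$, this factor stays between $e^{-pM_\infty}$ and $1$ for all $t$.

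The induction then runs as follows. The base cases are $M_0\equiv1$ (Proposition~\ref{prop:f.prob_distr}) and $M_1=m$, bounded by~\eqref{eq:m_inf.bounds}. Assuming $M_0,\dots,M_{p-1}$ are uniformly bounded, the source term is dominated by $C(\lambda(t)+\mu(t))$ for a constant $C$ depending on $p$ and on the sup-bounds of the lower moments. Multiplying by $e^{-pM(t)}$, integrating on $[0,\,t]$, and using $e^{-pM(s)}\leq1$ together with $M_j(s)\leq\sup_t M_j$ gives
$$ M_p(t)\leq e^{pM(t)}\int_0^te^{-pM(s)}C\bigl(\lambda(s)+\mu(s)\bigr)\,ds\leq e^{pM_\infty}C\,(\Lambda_\infty+M_\infty), $$
a bound independent of $t$, which closes the induction.

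The main obstacle is rigour, not algebra: the formal manipulation of~\eqref{eq:Boltz_f.weak} with the polynomially growing test function $\varphi(i)=i^p$ presupposes both that $M_p(t)<+\infty$ and that one may differentiate the infinite sum term by term, which is precisely what we are trying to establish. I would circumvent this by working with the truncated moments $M_p^N(t):=\sum_{n=0}^N n^pf(n,t)$, which are manifestly finite. Differentiating $M_p^N$ via the strong form~\eqref{eq:Boltz_f.strong}, reindexing the gain term $n\mapsto n-1$, and discarding the non-positive boundary contribution $-N^p(\lambda(t)+\mu(t)N)f(N,t)$ produces the one-sided inequality $\frac{d}{dt}M_p^N\leq p\mu(t)M_p^N+\sum_{j<p}c_{p,j}(t)M_j^N$. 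Running the estimate above on $M_p^N$ yields a bound uniform in both $N$ and $t$; monotone convergence as $N\to\infty$ then delivers simultaneously the finiteness of $M_p(t)$ and the desired uniform-in-time bound, legitimising the whole hierarchy a posteriori.
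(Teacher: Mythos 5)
Your proposal follows essentially the same route as the paper's proof: the same moment hierarchy obtained from $\varphi(i)=i^p$ in the weak form, with the top-order term $p\mu(t)\cM_p$ absorbed by the integrating factor $e^{-pM(t)}$, and an induction on the order using $\Lambda_\infty,M_\infty<+\infty$ to get bounds uniform in $t$. The only difference is your closing truncation argument with $M_p^N$, which the paper omits (it manipulates the infinite sums formally); this is a legitimate extra layer of rigour rather than a different method.
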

\begin{proof}
Let
$$ \cM_k(t):=\sum_{i=0}^{\infty}i^kf(i,t), \qquad k\in\N, $$
be the $k$-th statistical moment\footnote{With reference to Section~\ref{sect:basic_prop.f}, we have for instance $\cM_1=m$, $\cM_2=E$.} of $f$ at time $t$. Taking $\varphi(i)=i^k$ in~\eqref{eq:Boltz_f.weak} we obtain that $\cM_k$ obeys the equation
$$ \frac{d\cM_k}{dt}=\sum_{i=0}^\infty(\lambda(t)+\mu(t)i)\left((i+1)^k-i^k\right)f(i,t). $$
Writing $(i+1)^k=\sum_{j=0}^k\binom{k}{j}i^j=\sum_{j=0}^{k-1}\binom{k}{j}i^j+i^k$ we further obtain
\begin{align*}
    \frac{d\cM_k}{dt} &= \sum_{j=0}^{k-1}\binom{k}{j}\sum_{i=0}^\infty\left(\lambda(t)i^j+\mu(t)i^{j+1}\right)f(i,t)
        =\sum_{j=0}^{k-1}\binom{k}{j}\left(\lambda(t)\cM_j+\mu(t)\cM_{j+1}\right) \\
    &= \sum_{j=0}^{k-2}\binom{k}{j}\left(\lambda(t)\cM_j+\mu(t)\cM_{j+1}\right)+k\lambda(t)\cM_{k-1}+k\mu(t)\cM_k.
\end{align*}
Multiplying both sides by $e^{-kM(t)}$ yields
$$ \frac{d}{dt}\left(e^{-kM(t)}\cM_k\right)=
    e^{-kM(t)}\left\{\sum_{j=0}^{k-2}\binom{k}{j}\left(\lambda(t)\cM_j+\mu(t)\cM_{j+1}\right)+k\lambda(t)\cM_{k-1}\right\}, $$
whence, integrating on $[0,\,t]$, $t>0$,
\begin{multline*}
    \cM_k(t)=e^{kM(t)}\cM_k(0)+\int_0^te^{k(M(t)-M(s))} \\
    \times\left\{\sum_{j=0}^{k-2}\binom{k}{j}\bigl(\lambda(s)\cM_j(s)+\mu(s)\cM_{j+1}(s)\bigr)+k\lambda(s)\cM_{k-1}(s)\right\}ds.
\end{multline*}
In particular, $\cM_0(0)=1$ whereas $\cM_k(0)=0$ for all $k>0$ because $f(n,0)=\delta_0(n)$.

Clearly, $\cM_0\equiv 1$ is uniformly bounded. Assume now that the first $k$ moments of $f$ are uniformly bounded in time, i.e. that there exist constants $C_j>0$, $j=0,\,\dots,\,k$, such that $\cM_j(t)\leq C_j$ for all $t\geq 0$. Then also $\cM_{k+1}$ is uniformly bounded in time, indeed:
\begin{align*}
    \cM_{k+1}(t) &= \int_0^te^{(k+1)(M(t)-M(s))}
        \Biggl\{\sum_{j=0}^{k-1}\binom{k+1}{j}\bigl(\lambda(s)\cM_j(s)+\mu(s)\cM_{j+1}(s)\bigr) \\
    &\phantom{=\int_0^te^{(k+1)(M(t)-M(s))}\Biggl\{} +(k+1)\lambda(s)\cM_k(s)\Biggr\}\,ds,
\intertext{where we have taken into account that $\cM_{k+1}(0)=0$ for every $k\geq 0$, and further}
    &\leq e^{(k+1)M_\infty}\int_0^{+\infty}\left\{\sum_{j=0}^{k-1}\binom{k+1}{j}\left(\lambda(s)C_j+\mu(s)C_{j+1}\right)+(k+1)\lambda(s)C_k\right\}ds \\
    &= e^{(k+1)M_\infty}\sum_{j=0}^{k-1}\binom{k+1}{j}\left(\Lambda_\infty C_j+M_\infty C_{j+1}\right)+(k+1)\Lambda_\infty C_k=:C_{k+1},
\end{align*}
in view of~\eqref{eq:Lambda_M.inf} along with $M(t)\leq M_\infty$ for all $t\geq 0$ because $M$ is non-decreasing.

By induction on $k$, we conclude that all moments of $f$ are uniformly bounded in time.
\end{proof}

Theorem~\ref{theo:moments_bounded} implies straightforwardly:
\begin{corollary}
\label{cor:slimTails}
If $\lambda,\,\mu\in L^1(\R_+)$ then $f(\cdot,t)$ has a slim tail for all $t>0$.
\end{corollary}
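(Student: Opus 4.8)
The plan is to deduce the slim-tail property directly from the uniform moment bounds of Theorem~\ref{theo:moments_bounded} by means of an elementary Markov-type inequality. Recall that the pointwise bound~\eqref{eq:f.bound}, rewritten asymptotically as~\eqref{eq:bound.f.Stirling}, certifies a slim tail only in the regime $t<t^\ast$ (equivalently, whenever $eM(t)<1$), while it becomes uninformative for $t\geq t^\ast$ because $\frac{(eM(t))^n}{\sqrt{n}}\to+\infty$. The corollary is precisely what closes this gap: since $\lambda,\mu\in L^1(\R_+)$, every statistical moment of $f$ is finite (indeed uniformly bounded in time), and finiteness of all moments forces the tail to decay faster than any inverse power of $n$ at every $t>0$, irrespective of the value of $M(t)$.

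Concretely, I would fix $t>0$ and an arbitrary order $k\in\N$. By Theorem~\ref{theo:moments_bounded} there is a constant $C_k>0$, independent of $t$, with $\cM_k(t)=\sum_{i=0}^\infty i^kf(i,t)\leq C_k$. Since all the terms of this series are non-negative, the single term of index $n$ is dominated by the whole sum, so that $n^kf(n,t)\leq\cM_k(t)\leq C_k$ for every $n\geq 1$, whence
\begin{equation*}
    f(n,t)\leq\frac{C_k}{n^k}.
\end{equation*}
Because $k$ was arbitrary, this shows that $f(n,t)=O(n^{-k})$ as $n\to+\infty$ for every $k\in\N$; in fact, applying the same bound with $k+1$ in place of $k$ gives $n^kf(n,t)\leq C_{k+1}/n\to 0$, i.e. $f(n,t)=o(n^{-k})$ for all $k$. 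Thus $f(\cdot,t)$ decays faster than any polynomial, which is the slim-tail property, and the conclusion holds simultaneously for all $t>0$.

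I do not expect a genuine obstacle here: the substantive work is carried out in Theorem~\ref{theo:moments_bounded}, and the corollary only translates ``all moments finite'' into a pointwise decay statement via Markov's inequality. The one point worth stating carefully is the notion of slim tail being used: it suffices to exhibit decay faster than every power of $n$ (equivalently, finiteness of all moments), which rules out the fat, power-law tails typical of heavy-tailed distributions. It is worth emphasising, by contrast with the pointwise estimate~\eqref{eq:bound.f.Stirling}, that this argument delivers slim tails also in the otherwise intractable range $t\geq t^\ast$, since it never uses the size of $M(t)$ but only the summability of $\lambda$ and $\mu$.
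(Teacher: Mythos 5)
Your argument is correct and is essentially the paper's own: the paper's proof of the corollary is the one-line observation that $f(\cdot,t)$ has finite moments of every order by Theorem~\ref{theo:moments_bounded}, and you simply make explicit the standard Markov-type step $n^kf(n,t)\leq\cM_k(t)\leq C_k$ that converts finiteness of all moments into faster-than-polynomial decay of the tail. No gap; your version is just a more detailed rendering of the same deduction.
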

\begin{proof}
Indeed, $f(\cdot,t)$ has finite statistical moments of any order for all $t>0$.
\end{proof}

The slim tail of $f(\cdot,t)$ indicates that, at every time, the occurrence of a large total number of accidents up to that time is a quite unlikely event. Indeed, for $\bar{n}\in\N$ we have $\Prob{N_t\geq\bar{n}}=\sum_{n=\bar{n}}^\infty f(n,t)$ and the larger $\bar{n}$ the smaller this quantity owing to the slimness of the tail of $f$.

\subsection{The case of~\texorpdfstring{$\boldsymbol{\lambda}$}{},~\texorpdfstring{$\boldsymbol{\mu}$}{} constant}
When $\lambda,\,\mu>0$ are constant the intensity $\lambda^\ast$ takes the special form~\eqref{eq:Hawkes.mu_const}, which allows for further developments of the theory. We refer the interested reader to Section~\ref{sect:lambda_mu_const.f} of the Supplementary Material.

\section{Accident times}
\label{sect:acc_times}
After investigating the distribution function of the number of accidents $N_t$, we turn our attention to the law of the jump times $\{T_k\}_{k\geq 1}$, which in this context we rename \textit{accident times}. Next, we will investigate also the law of the holding times $\{\Delta{T}_k\}_{k\geq 1}$, that here we call \textit{interaccident times} as they represent the time elapsing between an accident and the subsequent one. See Figure~\ref{fig:Nt}. We will ascertain that, due to the self-excitation of the process $\{N_t,\,t\geq 0\}$, the interaccident times are not identically distributed, hence that $\{N_t,\,t\geq 0\}$ is indeed not a classical renewal process. Despite this technical difficulty, we will provide analytical results which characterise quite explicitly the (probabilistic) occurrence of car accidents in time.

\subsection{Law of the accident times}
Formally, we may define the time of occurrence of the $k$-th accident, $k\in\N$, $k\geq 1$, as
$$ T_k:=\inf\{t\geq 0\,:\,N_t=k\}, $$
whence we deduce that the event $\{T_k\leq t\}$ is the same as $\{N_t\geq k\}$. Therefore:
$$ \Prob{T_k\leq t}=\Prob{N_t\geq k}=1-\sum_{n=0}^{k-1}\Prob{N_t=n}=1-\sum_{n=0}^{k-1}f(n,t). $$

Letting $g_k=g_k(t)$ be the probability density function (pdf) of the random variable $T_k$, we may write $g_k(t)=\frac{d}{dt}\Prob{T_k\leq t}=-\sum_{n=0}^{k-1}\partial_tf(n,t)$ and further, expressing $\partial_tf(n,t)$ via~\eqref{eq:Boltz_f.strong},
$$ g_k(t)=\lambda(t)\sum_{n=0}^{k-1}\bigl(f(n,t)-f(n-1,t)\bigr)+\mu(t)\sum_{n=0}^{k-1}\bigl(nf(n,t)-(n-1)f(n-1,t)\bigr). $$
These telescopic sums may be evaluated explicitly. Taking also Remark~\ref{rem:f.n_neg} into account we get finally
\begin{equation}
    g_k(t)=\bigl(\lambda(t)+\mu(t)(k-1)\bigr)f(k-1,t).
    \label{eq:gk}
\end{equation}

It is instructive to check explicitly that such a $g_k$ is a pdf:
\begin{proposition} \label{prop:gk.pdf}
Assume $\lambda\not\in L^1(\R_+)$ and either $\mu\in L^1(\R_+)$ or $\mu$ constant. Then $g_k$ given by~\eqref{eq:gk} is a pdf in $\R_+$ for every $k\in\N$, $k\geq 1$.
\end{proposition}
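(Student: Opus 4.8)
The plan is to verify the two defining properties of a probability density on $\R_+$: non-negativity and unit total mass. Non-negativity is immediate: by hypothesis $\lambda,\mu\ge 0$, while $k-1\ge 0$ and $f(k-1,t)\ge 0$ by Proposition~\ref{prop:f.prob_distr}, so $g_k(t)=\bigl(\lambda(t)+\mu(t)(k-1)\bigr)f(k-1,t)\ge 0$ for all $t>0$. The real content of the statement thus lies in the normalisation $\int_0^{+\infty}g_k(t)\,dt=1$, and this is where the assumptions on $\lambda,\mu$ will be used.

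For the normalisation I would exploit the very telescoping identity that produced~\eqref{eq:gk}, namely $g_k(t)=-\sum_{n=0}^{k-1}\partial_tf(n,t)=-\tfrac{d}{dt}S_k(t)$, where $S_k(t):=\sum_{n=0}^{k-1}f(n,t)=\Prob{N_t\le k-1}$. Integrating on $[0,T]$ and using $S_k(0)=1$ (only the $n=0$ term $f(0,0)=1$ survives at $t=0$) gives $\int_0^T g_k(t)\,dt=1-S_k(T)$; since $g_k\ge 0$ this simultaneously shows the improper integral converges. Hence the claim reduces to $\lim_{t\to+\infty}S_k(t)=0$, i.e. to proving $f(n,t)\to 0$ as $t\to+\infty$ for each fixed $n\in\{0,\dots,k-1\}$.

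To establish $f(n,t)\to 0$ I would argue by induction on $n$. The hypothesis $\lambda\notin L^1(\R_+)$ forces $\Lambda(t)\uparrow+\infty$, so the base case is $f(0,t)=e^{-\Lambda(t)}\to 0$, cf.~\eqref{eq:f(0,t)}. For the inductive step, set $\Phi:=\Lambda+nM$, which also diverges since $M\ge 0$, and read the recurrence~\eqref{eq:f.recurrence} as $f(n,t)=e^{-\Phi(t)}\int_0^t e^{\Phi(s)}\bigl(\lambda(s)+(n-1)\mu(s)\bigr)f(n-1,s)\,ds$. The key observation is that $\lambda+(n-1)\mu\le\lambda+n\mu=\Phi'$, so the positive measure $e^{\Phi(s)}\bigl(\lambda(s)+(n-1)\mu(s)\bigr)\,ds$ has mass over $[0,t]$ at most $\int_0^t\Phi'(s)e^{\Phi(s)}\,ds=e^{\Phi(t)}-1\le e^{\Phi(t)}$. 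Thus $f(n,t)$ is a weighted average of the values $f(n-1,\cdot)$ with total weight $\le 1$: splitting the integral at a point beyond which $f(n-1,\cdot)<\varepsilon$ (inductive hypothesis) bounds the tail contribution by $\varepsilon$, while the finite head integral is killed by the prefactor $e^{-\Phi(t)}\to 0$. Letting $\varepsilon\to 0$ gives $f(n,t)\to 0$ and closes the induction.

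The hard part is precisely this last step in the regime of \emph{constant} (hence non-integrable) $\mu$. When $\mu\in L^1(\R_+)$ the averaging argument can be bypassed: since $M(t)\le M_\infty$, Theorem~\ref{theo:f.bound} yields $f(n,t)\le\tfrac{e^{-\Lambda(t)}}{n!}\bigl(\Lambda(t)+M_\infty(n-1)\bigr)^n\to 0$, the exponential decay in $\Lambda(t)$ beating the polynomial growth. For constant $\mu>0$, however, $M(t)=\mu t$ grows and that same bound may even diverge (for instance if $\lambda$ decays so that $\Lambda(t)\ll t$), so the cruder estimate is genuinely insufficient and the concentration argument above is what rescues the proof. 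I would therefore present the averaging argument as the unifying mechanism covering both admissible cases, noting in passing that it in fact uses only $\lambda\notin L^1(\R_+)$ together with $\mu\ge 0$.
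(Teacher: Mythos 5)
Your proof is correct, and at the decisive step it takes a genuinely different --- and in fact more robust --- route than the paper. The reduction is the same in substance: both arguments boil the normalisation down to showing $f(n,t)\to 0$ as $t\to+\infty$ for each fixed $n$ (the paper proceeds by induction on $k$ via the identity $\partial_t f(k,t)=g_k(t)-g_{k+1}(t)$, you by integrating the telescoped sum $g_k=-\partial_t\sum_{n=0}^{k-1}f(n,t)$ directly; these are equivalent). The difference lies in how that limit is established. The paper simply invokes the a priori bound of Theorem~\ref{theo:f.bound}, which settles the case $\mu\in L^1(\R_+)$ at once but is genuinely insufficient for constant $\mu$: there $M(t)=\mu t$ and the bound $\frac{e^{-\Lambda(t)}}{n!}\bigl(\Lambda(t)+\mu t(n-1)\bigr)^n$ tends to zero only if $\Lambda(t)-n\log t\to+\infty$, which $\lambda\notin L^1(\R_+)$ alone does not guarantee (take $\lambda(t)=\frac{1}{1+t}$, so that $e^{-\Lambda(t)}=\frac{1}{1+t}$ and the bound diverges for $n\geq 2$). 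Your averaging argument --- writing the recurrence with the integrating factor $\Phi=\Lambda+nM$, observing that $\lambda+(n-1)\mu\leq\Phi'$ so the kernel $e^{-\Phi(t)}e^{\Phi(s)}\Phi'(s)\,ds$ has total mass at most $1$ on $[0,t]$, and splitting the integral at a point beyond which $f(n-1,\cdot)<\varepsilon$ --- closes exactly this gap, and as you note it uses only $\Lambda(t)\to+\infty$ and $\mu\geq 0$, so it actually proves the proposition under weaker hypotheses than stated, consistently with the paper's own remark that its assumptions are not the most general possible. One minor imprecision in a side comment: for constant $\mu$ the paper's bound fails precisely when $\Lambda(t)-n\log t\not\to+\infty$, a more restrictive circumstance than your ``$\Lambda(t)\ll t$'' (e.g.\ $\Lambda(t)=\sqrt{t}$ still makes the bound vanish); this does not affect the validity of your argument.
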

\begin{proof}
Clearly, $g_k(t)\geq 0$ for all $t\geq 0$. Thus, the key point is to show that it has unitary integral in $\R_+$.

With $k=1$ we have explicitly $g_1(t)=\lambda(t)f(0,t)=\lambda(t)e^{-\Lambda(t)}$ in view of~\eqref{eq:f(0,t)} and we may compute $\int_0^{+\infty}g_1(t)\,dt=1-e^{-\Lambda_\infty}=1$, since $\Lambda_\infty=+\infty$ owing to the assumption $\lambda\not\in L^1(\R_+)$.

We now assume that $g_k$ has unitary integral in $\R_+$ for a certain $k\in\N$ and we check the analogous property for $g_{k+1}$. First, combining~\eqref{eq:Boltz_f.strong} and~\eqref{eq:gk} we notice that $\partial_tf(k,t)=g_k(t)-g_{k+1}(t)$, whence, integrating in time,
$$ \lim_{t\to +\infty}f(k,t)-f(k,0)=1-\int_0^{+\infty}g_{k+1}(t)\,dt. $$
Because of~\eqref{eq:f(0,t)} it results $f(k,0)=0$ for every $k\geq 1$ whereas~\eqref{eq:f.bound} together with the present assumptions on $\lambda,\,\mu$ imply $\lim\limits_{t\to +\infty}f(k,t)=0$. Hence $\int_0^{+\infty}g_{k+1}(t)\,dt=1$ and the thesis follows by induction on $k$.
\end{proof}

\begin{remark}
The assumptions of Proposition~\ref{prop:gk.pdf} are sufficiently representative of the cases of interest in this work but they are not the most general possible. For instance, if $\lambda$ is non-integrable then $\mu$ may be in turn a non-integrable function, not necessarily constant, such that $e^{-\frac{\Lambda(t)}{n}}M(t)\to 0$ when $t\to +\infty$, in such a way that from~\eqref{eq:f.bound} one still has $f(k,t)\to 0$ when $t\to +\infty$.

On the other hand, if $\lambda$ is integrable then $\Lambda_\infty<+\infty$ and $\int_0^{+\infty}g_1(t)\,dt=1-e^{-\Lambda_\infty}<1$. In this case, $g_1$, and likewise the other $g_k$'s, is not a pdf because there is a non-zero probability that $T_1=+\infty$, i.e. that the first accident never occurs (notice that $\int_0^{+\infty}g_1(t)\,dt=\Prob{T_1<+\infty}$). Such a probability can be computed as
$$ \Prob{T_1=+\infty}=\Prob{N_t=0\ \text{for arbitrarily large } t}=\lim_{t\to +\infty}f(0,t)=e^{-\Lambda_\infty}, $$
i.e. precisely the value to be added to $\Prob{T_1<+\infty}$ to obtain $\Prob{T_1\leq +\infty}=1$.
\end{remark}

\subsection{Law of the interaccident times}
\label{sect:LawIntermediate}
The discussion set forth about the law of the $T_k$'s is propaedeutical to the investigation of the interaccident times
$$ \Delta{T}_k:=T_k-T_{k-1}, \qquad k\in\N,\ k\geq 1. $$
As already mentioned, $\Delta{T}_k$ expresses the time elapsing between the occurrence of the $(k-1)$-th and the $k$-th accident or, in other words, the waiting time of the $k$-th accident after the $(k-1)$-th has occurred.

For the sake of convenience, throughout this section we assume that $\lambda,\,\mu$ are such that the $g_k$'s are the pdf's of the $T_k$'s as stated by Proposition~\ref{prop:gk.pdf}. Letting conventionally $T_0=0$, we observe that $\Delta{T}_1\equiv T_1$, therefore 
\begin{align}
\label{eq:DeltaT1}
 \Delta{T}_1\sim g_1(t)=\lambda(t)e^{-\Lambda(t)}   
\end{align}
cf.~\eqref{eq:f(0,t)} and~\eqref{eq:gk}. In particular, the law of $\Delta{T}_1$ is unaffected by $\mu$, which is reasonable considering that the self-excitation of the accident process may play a role only after the very first accident has occurred, hence not in the time interval $[0,\,T_1)$.

The determination of the laws of the $\Delta{T}_k$'s for $k>1$ is much less straightforward, due to the statistical dependence among the $T_k$'s. Notice indeed that $T_{k-1},\,T_k$ are clearly not independent because e.g., $T_{k-1}\leq T_k$.

To obtain the law of $\Delta{T}_k$ we proceed by determining first the joint law of the pair $(T_{k-1},\,T_k)$, whose pdf we denote by $g_{k-1,k}=g_{k-1,k}(s,t)$ with $(s,\,t)\in\R_+^2$. Let $g_{T_k\vert T_{k-1}}=g_{T_k\vert T_{k-1}}(t\vert s)$ be the pdf of $T_k$ conditioned to $T_{k-1}$. Then
\begin{equation}
    g_{k-1,k}(s,t)=g_{T_k\vert T_{k-1}}(t\vert s)g_{k-1}(s).
    \label{eq:gk-1,k.def}
\end{equation}
Since $g_{k-1}$ is provided by~\eqref{eq:gk}, the only quantity to be really found is $g_{T_k\vert T_{k-1}}$. First, we consider that $g_{T_k\vert T_{k-1}}(t\vert s)=\partial_t\Prob{T_k\leq t\vert T_{k-1}=s}$. Consequently, we compute:
$$ \Prob{T_k\leq t\vert T_{k-1}=s}=\Prob{N_t\geq k\vert T_{k-1}=s}=1-\sum_{n=0}^{k-1}\Prob{N_t=n\vert T_{k-1}=s}. $$
Clearly, if $t\leq s$ then $\Prob{T_k\leq t\vert T_{k-1}=s}=0$, therefore we may focus on the case $t>s$ only. On the other hand, for $t>s$ it results $\Prob{N_t=n\vert T_{k-1}=s}=0$ for every $n<k-1$, hence finally
\begin{equation}
    \Prob{T_k\leq t\vert T_{k-1}=s}=1-\Prob{N_t=k-1\vert T_{k-1}=s}.
    \label{eq:Tk|Tk-1.intermediate_formula}
\end{equation}
This formula shows that a key quantity in our computation is the law of $N_t$ conditioned to $T_{k-1}$. Therefore, we make now a small detour to tackle the problem of determining $\Prob{N_t=n\vert T_k=s}$ for $k,\,n\in\N$ and $0\leq s<t$.

Equation~\eqref{eq:Boltz_f.strong}, together with the initial condition~\eqref{eq:f.init_cond}, provides $\Prob{N_t=n}$ for every $n\in\N$ and $t>0$. If, upon fixing $s>0$ and $k\in\N$, we replace~\eqref{eq:f.init_cond} with the condition
\begin{equation}
    f(n,s)=
    \begin{cases}
        1 & \text{if } n=k \\
        0 & \text{if } n\neq k
    \end{cases}
    \label{eq:f.init_cond.s}
\end{equation}
we may interpret the solution to~\eqref{eq:Boltz_f.strong} for $t\in (s,\,+\infty)$ precisely as $\Prob{N_t=n\vert T_k=s}$. Indeed, the new initial condition~\eqref{eq:f.init_cond.s} establishes that the evolution of the system begins at time $t=s$ when the $k$-th accident occurs, i.e. that $T_k=s$. In order to stress this interpretation, we introduce the notation
$$ f_{T_k=s}(n,t):=\Prob{N_t=n\vert T_k=s}, $$
which solves~\eqref{eq:Boltz_f.strong} for $t>s$ with initial condition~\eqref{eq:f.init_cond.s}. Clearly, $f_{T_k=s}(n,t)=0$ for $t>s$ and $n<k$.

In~\eqref{eq:Tk|Tk-1.intermediate_formula} we need in particular $f_{T_{k-1}=s}(k-1,t)$, which from~\eqref{eq:Boltz_f.strong} with $n=k-1$ and  $f_{T_{k-1}=s}(k-2,t)=0$ solves
\begin{equation*}
    \begin{cases}
        \partial_tf_{T_{k-1}=s}(k-1,t)=-\bigl(\lambda(t)+\mu(t)(k-1)\bigr)f_{T_{k-1}=s}(k-1,t), & t>s \\[2mm]
        f_{T_{k-1}=s}(k-1,s)=1,
    \end{cases}
\end{equation*}
thus $f_{T_{k-1}=s}(k-1,t)=e^{-(\Lambda(t)-\Lambda(s))-(M(t)-M(s))(k-1)}$ for $t>s$ while $f_{T_{k-1}=s}(k-1,t)=0$ for $t<s$. Consequently,
\begin{align*}
    g_{T_k\vert T_{k-1}}(t\vert s) &= \partial_t\Prob{T_k\leq t\vert T_{k-1}=s}, \\
\intertext{which, owing to~\eqref{eq:Tk|Tk-1.intermediate_formula}, becomes}
    &= -\partial_t\Prob{N_t=k-1\vert T_{k-1}=s}=-\partial_tf_{T_{k-1}=s}(k-1,t) \\
    &= \begin{cases}
            \bigl(\lambda(t)+\mu(t)(k-1)\bigr)e^{-(\Lambda(t)-\Lambda(s))-(M(t)-M(s))(k-1)} & \text{if } t>s \\
            0 & \text{if } t<s
        \end{cases}
\end{align*}
and finally, recalling~\eqref{eq:gk-1,k.def} and invoking~\eqref{eq:gk} to express $g_{k-1}(s)$,
\begin{equation}
    g_{k-1,k}(s,t)=
    \begin{cases}
        \bigl(\lambda(s)+\mu(s)(k-2)\bigr)\bigl(\lambda(t)+\mu(t)(k-1)\bigr) \\
        \quad\times e^{-(\Lambda(t)-\Lambda(s))-(M(t)-M(s))(k-1)}f(k-2,s) & \text{if } t>s \\
        0 & \text{if } t<s.
    \end{cases}
    \label{eq:gk-1,k}
\end{equation}
We remark that here $f(k-2,s)$ is the solution to~\eqref{eq:Boltz_f.strong} for $n=k-2$ with initial condition~\eqref{eq:f.init_cond}.

The following check is in order:
\begin{lemma}
Assume $\lambda\not\in L^1(\R_+)$. Then~\eqref{eq:gk-1,k} defines a pdf in $\R_+^2$ for every $k\in\N$, $k\geq 2$.
\label{lem:gIsPdf}
\end{lemma}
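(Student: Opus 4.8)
The plan is to exploit the factorised structure of~\eqref{eq:gk-1,k}. First I would rewrite the density using~\eqref{eq:gk}: since $\bigl(\lambda(s)+\mu(s)(k-2)\bigr)f(k-2,s)=g_{k-1}(s)$, for $t>s$ one has
$$ g_{k-1,k}(s,t)=g_{k-1}(s)\,\bigl(\lambda(t)+\mu(t)(k-1)\bigr)\,e^{-(\Lambda(t)-\Lambda(s))-(M(t)-M(s))(k-1)}, $$
which is precisely the factorisation $g_{k-1,k}(s,t)=g_{k-1}(s)\,g_{T_k\mid T_{k-1}}(t\mid s)$ anticipated in~\eqref{eq:gk-1,k.def}. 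Non-negativity of $g_{k-1,k}$ on $\R_+^2$ is then immediate, since $\lambda,\mu\geq 0$, $f\geq 0$ by Proposition~\ref{prop:f.prob_distr}, and the exponential is positive. The real task is to check that the total mass equals one.

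Because the integrand is non-negative, I would invoke Tonelli's theorem to iterate $\int_0^{+\infty}\!\int_0^{+\infty} g_{k-1,k}(s,t)\,dt\,ds$ and perform the inner $t$-integration first, at $s$ fixed. The key observation is that on $\{t>s\}$ the conditional factor is a perfect time-derivative: by~\eqref{eq:Lambda_M},
$$ \frac{d}{dt}\Bigl[(\Lambda(t)-\Lambda(s))+(M(t)-M(s))(k-1)\Bigr]=\lambda(t)+\mu(t)(k-1), $$
so that $g_{T_k\mid T_{k-1}}(t\mid s)=-\frac{d}{dt}\,e^{-(\Lambda(t)-\Lambda(s))-(M(t)-M(s))(k-1)}$. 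The inner integral therefore telescopes to the boundary values of $-e^{-(\Lambda(t)-\Lambda(s))-(M(t)-M(s))(k-1)}$ evaluated at $t=s$ and $t=+\infty$.

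The hard part, and really the only place where the hypothesis enters, is to show that this boundary term vanishes at $+\infty$. At $t=s$ the exponent is zero, contributing $1$. As $t\to+\infty$, the assumption $\lambda\not\in L^1(\R_+)$ forces $\Lambda(t)-\Lambda(s)\to+\infty$, while $(M(t)-M(s))(k-1)\geq 0$ because $M$ is non-decreasing (being a primitive of the non-negative $\mu$) and $k\geq 2$; hence the exponent tends to $-\infty$ and the exponential to $0$. I would conclude that $\int_s^{+\infty}g_{T_k\mid T_{k-1}}(t\mid s)\,dt=1$ for every $s>0$, so that integrating out $t$ leaves exactly $g_{k-1}(s)$. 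Finally, since $k\geq 2$ gives $k-1\geq 1$, Proposition~\ref{prop:gk.pdf} (whose assumptions are standing throughout this section) ensures that $g_{k-1}$ is itself a pdf, whence $\int_0^{+\infty}g_{k-1}(s)\,ds=1$ and the double integral equals one, completing the verification.
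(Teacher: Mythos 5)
Your proposal is correct and follows essentially the same route as the paper's proof: integrate out $t$ first by recognising the conditional factor as a perfect time-derivative, use $\lambda\not\in L^1(\R_+)$ (hence $\Lambda(t)\to+\infty$) to kill the boundary term at infinity, and reduce the double integral to $\int_0^{+\infty}g_{k-1}(s)\,ds=1$ via Proposition~\ref{prop:gk.pdf}. Your reorganisation through the factorisation $g_{k-1,k}=g_{k-1}\,g_{T_k\vert T_{k-1}}$ is only a cosmetic difference from the paper's direct computation.
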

\begin{proof}
Clearly, $g_{k-1,k}(s,t)\geq 0$ for all $s,\,t\in\R_+$, therefore we only have to prove that $g_{k-1,k}$ has unitary integral in $\R_+^2$:
\begin{align*}
    \int_0^{+\infty}\int_0^{+\infty}g_{k-1,k}(s,t)\,ds\,dt &= \int_0^{+\infty}\bigl(\lambda(s)+\mu(s)(k-2)\bigr)e^{\Lambda(s)+M(s)(k-1)}f(k-2,s) \\
    &\phantom{=} \times\left(\int_s^{+\infty}\bigl(\lambda(t)+\mu(t)(k-1)\bigr)e^{-\Lambda(t)-M(t)(k-1)}\,dt\right)ds \\
    &= \int_0^{+\infty}\bigl(\lambda(s)+\mu(s)(k-2)\bigr)e^{\Lambda(s)+M(s)(k-1)}f(k-2,s) \\
    &\phantom{=} \times\left(-e^{-\Lambda(t)-M(t)(k-1)}\right\vert_s^{+\infty}ds; \\
\intertext{considering that $e^{-\Lambda(t)-M(t)(k-1)}\leq e^{-\Lambda(t)}\to 0$ for $t\to +\infty$ because, by assumption, $\Lambda(t)\to +\infty$ when $t\to +\infty$ and recalling~\eqref{eq:gk} we obtain finally}
    &= \int_0^{+\infty}g_{k-1}(s)\,ds
\end{align*}
and the thesis follows from Proposition~\ref{prop:gk.pdf}.
\end{proof}

Let now $\tau\geq 0$. The joint pdf $g_{k-1,k}$ allows us to obtain the law of the interaccident time $\Delta{T}_k$ as:
\begin{align*}
    \Prob{\Delta{T}_k\leq\tau} &= \Prob{T_k\leq T_{k-1}+\tau}=\int_0^{+\infty}\int_0^{s+\tau}g_{k-1,k}(s,t)\,dt\,ds \\
    &= \int_0^{+\infty}\bigl(\lambda(s)+\mu(s)(k-2)\bigr)e^{\Lambda(s)+M(s)(k-1)}f(k-2,s) \\
    &\phantom{=} \quad\times\left(\int_s^{s+\tau}\bigl(\lambda(t)+\mu(t)(k-1)\bigr)e^{-\Lambda(t)-M(t)(k-1)}\,dt\right)ds \\
    &= -\int_0^{+\infty}\bigl(\lambda(s)+\mu(s)(k-2)\bigr)e^{-(\Lambda(s+\tau)-\Lambda(s))-(M(s+\tau)-M(s))(k-1)} \\
    &\phantom{=} \qquad\qquad\times f(k-2,s)\,ds \\
    &\phantom{=} +\int_0^{+\infty}\bigl(\lambda(s)+\mu(s)(k-2)\bigr)f(k-2,s)\,ds \\
    &= 1-\int_0^{+\infty}\bigl(\lambda(s)+\mu(s)(k-2)\bigr)e^{-(\Lambda(s+\tau)-\Lambda(s))-(M(s+\tau)-M(s))(k-1)} \\
    &\phantom{=} \qquad\qquad\times f(k-2,s)\,ds,
\end{align*}
whence, denoting by $h_k$ the pdf of $\Delta{T}_k$, we compute finally
\begin{align}
    \begin{aligned}[t]
        h_k(\tau) &= \frac{d}{d\tau}\Prob{\Delta{T}_k\leq\tau} \\
        &= \int_0^{+\infty}\bigl(\lambda(s)+\mu(s)(k-2)\bigr)\bigl(\lambda(s+\tau)+\mu(s+\tau)(k-1)\bigr) \\
        &\phantom{=} \quad\times e^{-(\Lambda(s+\tau)-\Lambda(s))-(M(s+\tau)-M(s))(k-1)}f(k-2,s)\,ds,
    \end{aligned}
    \label{eq:hk}
\end{align}
for $\tau\geq 0$ and $k\in\N$, $k\geq 1$. Since we do not have, in general, a closed expression of $f(k-2,s)$, we cannot be more explicit than this in the expression of $h_k$. Notice, however, that since in general $h_k$ depends on $k$ the $\Delta{T}_k$'s are not identically distributed.

\begin{remark} \label{rem:hk.pdf}
Equation~\eqref{eq:hk} shows that $h_k$ is non-negative. Moreover, under the assumption $\lambda\not\in L^1(\R_+)$, cf. Lemma~\ref{lem:gIsPdf}, we have $\Prob{\Delta{T}_k< +\infty}=1$. Then $\int_0^{+\infty} h_k(\tau)\,d\tau=\Prob{\Delta{T}_k< +\infty}-\Prob{\Delta{T}_k\leq 0}=1-\Prob{\Delta{T}_k\leq 0}$ and  we further observe that $\Prob{\Delta{T}_k\leq 0} =\Prob{\Delta{T}_k=0}$, because interaccident times are non-negative by construction. Also, $\Delta{T}_k$ cannot be zero for $T_{k-1}$, $T_k$ are distinct accident times by definition. Hence finally $\Prob{\Delta{T}_k\leq 0}=0$ and $h_k$ is indeed a pdf.
\end{remark}

\subsection{Aggregate interaccident time and its law}
\label{sect:aggregateInterTimes}
A further quantity of interest, which is often studied also experimentally, is what we might call the \textit{aggregate interaccident time}. This quantity is meant to represent the size of a generic interaccident time, regardless of whether it is the first, second, \dots, $k$-th interaccident time in a series of accidents up to a certain time horizon, viz. final time, $\cT>0$.

We formalise mathematically this concept by introducing a new random variable $\Delta{T}_K\in\R_+$, being $K\in\N$, $K\geq 1$, in turn a random variable, which models a random sampling of the $\Delta{T}_k$'s within the time horizon $\cT>0$. For $\tau\geq 0$, the event $\{\Delta{T}_K\leq\tau\}$ refers therefore to the occurrence of any two accidents with a temporal gap at most equal to $\tau$ within the time frame $[0,\,\cT]$.

Notice that
\begin{align*}
    \Prob{\Delta{T}_K\leq\tau} &= \sum_{k=1}^\infty\Prob{\Delta{T}_K\leq\tau\vert K=k}\Prob{K=k} \\
    &= \sum_{k=1}^\infty\Prob{\Delta{T}_k\leq\tau}\Prob{K=k},
\end{align*}
therefore the pdf of $\Delta{T}_K$, say $h=h(\tau)$, is obtained as:
\begin{align}
\label{eq:def.h}
h(\tau)=\frac{d}{d\tau}\Prob{\Delta{T}_K\leq\tau}=\sum_{k=1}^\infty h_k(\tau)\Prob{K=k},    
\end{align}
namely as a weighted average of the pdf's $\{h_k\}_{k=1}^\infty$ of the random variables $\Delta{T}_k$.

The point is how to construct the random variable $K$, in particular the values $\{\Prob{K=k}\}_{k=1}^\infty$ defining its law. This amounts to modelling the process by which a generic interaccident time $\Delta{T}_k$ is randomly sampled within the time horizon $\cT$. In particular, one should take into account that the smaller $k$ the more probable the occurrence of the $(k-1)$-th and $k$-th accidents within the time frame $[0,\,\cT]$. Proportionally to $\cT$, it should then be more probable to sample interaccident times $\Delta{T}_k$ with $k$ small than with $k$ large.

This discussion leads us to envisage a model in which the probability to sample a certain interaccident time $\Delta{T}_k$, hence $K=k$, is proportional to the probability that the $k$-th accident occurs within the time horizon $\cT$:
$$ \Prob{K=k}=C\Prob{T_k\leq\cT}=C\int_0^\cT g_k(t)\,dt, $$
where $C>0$ is a proportionality constant. To find it, we impose
\begin{align*}
    1=\sum_{k=1}^\infty\Prob{K=k} &= C\int_0^\cT\sum_{k=1}^\infty g_k(t)\,dt \\
\intertext{and recalling~\eqref{eq:gk}}
    &= C\int_0^\cT\left(\lambda(t)\sum_{k=1}^\infty f(k-1,t)+\mu(t)\sum_{k=1}^\infty(k-1)f(k-1,t)\right)dt; \\
\intertext{Proposition~\ref{prop:f.prob_distr} and~\eqref{eq:m} imply then}
    &= C\int_0^\cT\bigl(\lambda(t)+\mu(t)m(t)\bigr)\,dt \\
\intertext{and furthermore, owing to~\eqref{eq:m.ODE},}
    &= C\int_0^\cT\frac{dm(t)}{dt}\,dt=C(m(\cT)-m(0))=Cm(\cT),
\end{align*}
where we have taken the initial condition~\eqref{eq:f.init_cond} into account. Thus $C=\frac{1}{m(\cT)}$, so that in conclusion we set $\Prob{K=k}:=\frac{\Prob{T_k\leq\cT}}{m(\cT)}$ and we define
\begin{align}
    \begin{aligned}[t]
        h(\tau) &= \frac{1}{m(\cT)}\sum_{k=1}^\infty h_k(\tau)\Prob{T_k\leq\cT} \\
        &= \frac{1}{m(\cT)}\sum_{k=1}^\infty h_k(\tau)\int_0^\cT\bigl(\lambda(t)+\mu(t)(k-1)\bigr)f(k-1,t)\,dt
    \end{aligned}
    \label{eq:h}
\end{align}
for $\tau\geq 0$.

\begin{remark}
Since under $\lambda\not\in L^1(\R_+)$ the $h_k$'s are pdfs, cf. Remark~\ref{rem:hk.pdf}, and in view of the construction of the law of $K$, also $h$ turns out to be a pdf straightforwardly.
\end{remark}

\subsection{The case of~\texorpdfstring{$\boldsymbol{\lambda}$}{},~\texorpdfstring{$\boldsymbol{\mu}$}{} constant}
To detail further the theory of accident and interaccident times when $\lambda$, $\mu$ are positive constants, we refer the interested reader to Section~\ref{sect:aggregateInterTimes.constant_lambda_mu} of the Supplementary Material.

\section{Calibration of~\texorpdfstring{$\boldsymbol{\lambda}$}{} and~\texorpdfstring{$\boldsymbol{\mu}$}{} against real data}
\label{sec:realData}
Section~\ref{sect:numerics} of the Supplementary Material contains extensive numerical illustrations of the theoretical results discussed in the paper under various assumptions on $\lambda$ and $\mu$.

Here instead we go back to Figure~\ref{fig:motivation} and attempt to identify background intensity functions $\lambda$ and self-excitation intensity functions $\mu$ that fit real data. Our goal is not to select ``optimal'' (in whatever sense) $\lambda$ and $\mu$ but rather to exemplify \textit{classes} of functions that might be appropriate in the context of traffic accidents. While deferring to future research a more extensive study addressing the numerous degrees of freedom involved in the choice of $\lambda$ and $\mu$, we show that basic satisfactory examples of such classes may indeed be found.

The study focuses on a time horizon of one week, where $t=0$ is Sunday midnight and $t=10079$ is Saturday 11:59pm, corresponding to a time unit of one minute. This choice seems natural, as it is a commonly used time unit, and could be varied to a medium-size time horizon extending from some days to a few weeks. We assume $N_{0}=0$ so that accidents from the week before are neglected. The accident process $\{N_t,\,t\geq 0\}$ from our model \eqref{eq:Nt.evolution}, \eqref{eq:Prob.H}, and \eqref{eq:our_lambda*} is compared with British accident data from 2020\footnote{\url{https://data.gov.uk/dataset/cb7ae6f0-4be6-4935-9277-47e5ce24a11f/road-safety-data} -- Accessed 11th September 2024.}, where we considered all accidents within the rectangle of $[-2,1]$ longitude and $[51,52]$ latitude, which is an area west of London. There are 2205 accidents in the dataset. In the further discussion, we assume 51 weeks per year.
	
For our model with background intensity $\lambda$ from equation \eqref{eq:our_lambda*}, we choose 
\begin{align*}
	\lambda(t)= \tilde{C} c_0 \left(\frac{5}{4}+ \sin\left(\frac{\pi}{740}(t-540)\right)\right), 
\end{align*}
which models intraday fluctuations with low risks for accidents at night and larger risks in the afternoon, which have also been detected e.g.\ in \cite{Moutari.2018}. The constant $\tilde{C}\in[0,1]$ works as a weighting parameter between the background noise and the excitation accidents and can be varied. For $\tilde{C}=1$ the entire accident risk stems from the background noise and for $\tilde{C}=0$ there is no background accident risk. We choose the constant $c_0$ such that for $\tilde{C}=1$ and $\mu \equiv 0$ it holds
\begin{align*}
	\frac{2205}{51} &\approx \mathbb{E}[N_{10080}] = \int_0^{10080} \lambda(t)dt = \int_0^{10080} c_0 \left(\frac{5}{4}+ \sin\left(\frac{\pi}{740}(t-540)\right)\right)dt \\
	&\Rightarrow c_0 \approx 0.003413,
\end{align*}
representing an accident process without an excitation share that, in expectation, yields as many accidents as on average in one week in the British data. Next, we introduce an excitation share $(\tilde{C} \neq 1)$ and consider the following excitation functions
\begin{align*}
   \mu_1(t)=c_1, \qquad \mu_2(t)= c_2e^{-\frac{t}{600}}, \qquad \mu_3(t) = \frac{c_3}{t+50},
\end{align*}
with constants $c_1,c_2,c_3>0$. These functions are representatives of different integrability assumptions, i.e.\ $\mu_1 \notin L^1(\mathbb{R}_+)$, $\mu_2 \in L^1(\mathbb{R}_+)$, and $\mu_3$ close to integrability. All of them have already been considered in the previous sections, but were scaled for the particular application. We compare a scenario without excitation ($\tilde{C}=1$) with frameworks with weak excitation ($\tilde{C}=\frac{1}{2})$ and medium excitation share ($\tilde{C}=\frac{1}{3}$), for each excitation function $\mu_i, ~i=1,2,3$. In all of the three excitation functions we consider one free parameter given by either coefficient $c_1$, $c_2$, $c_3$. These have to be determined in all cases such that $\mathbb{E}[N_{10080}]\approx\frac{2205}{51}$. A Monte Carlo method on a discrete grid of possible values leads to the following choices:
\begin{align*}
	c_1^\text{weak}  = 0.00015, 
    \qquad c_2^\text{weak} =0.0081, 
    \qquad c_3^\text{weak}  = 0.6 \\ 
    c_1^\text{medium}  = 0.0002, 
    \qquad c_2^\text{medium}  = 0.0105, 
    \qquad c_3^\text{medium}  = 0.85.
\end{align*}

We start the analysis by choosing $\mu_1$, the time-constant excitation intensity function in Figure \ref{fig:realData11}. To assess whether the background intensity functions and the excitation intensity functions are suitable, we consider several illustrations. First, we compare the temporal evolutions of the number of accidents in all of the 51 weeks in 2020 that happened in the considered area (see also Figure \ref{fig:motivation}). The data is compared to the accident process from our model with medium, weak and no excitation (first row from left to right). To better compare the numbers of accidents after one week we additionally provide histograms for $N_{10080}$ in the three different cases and compare them to the total number of accidents found in the data for all weeks in the data set.

\begin{figure}[!t]
\centering
\includegraphics[width=\linewidth]{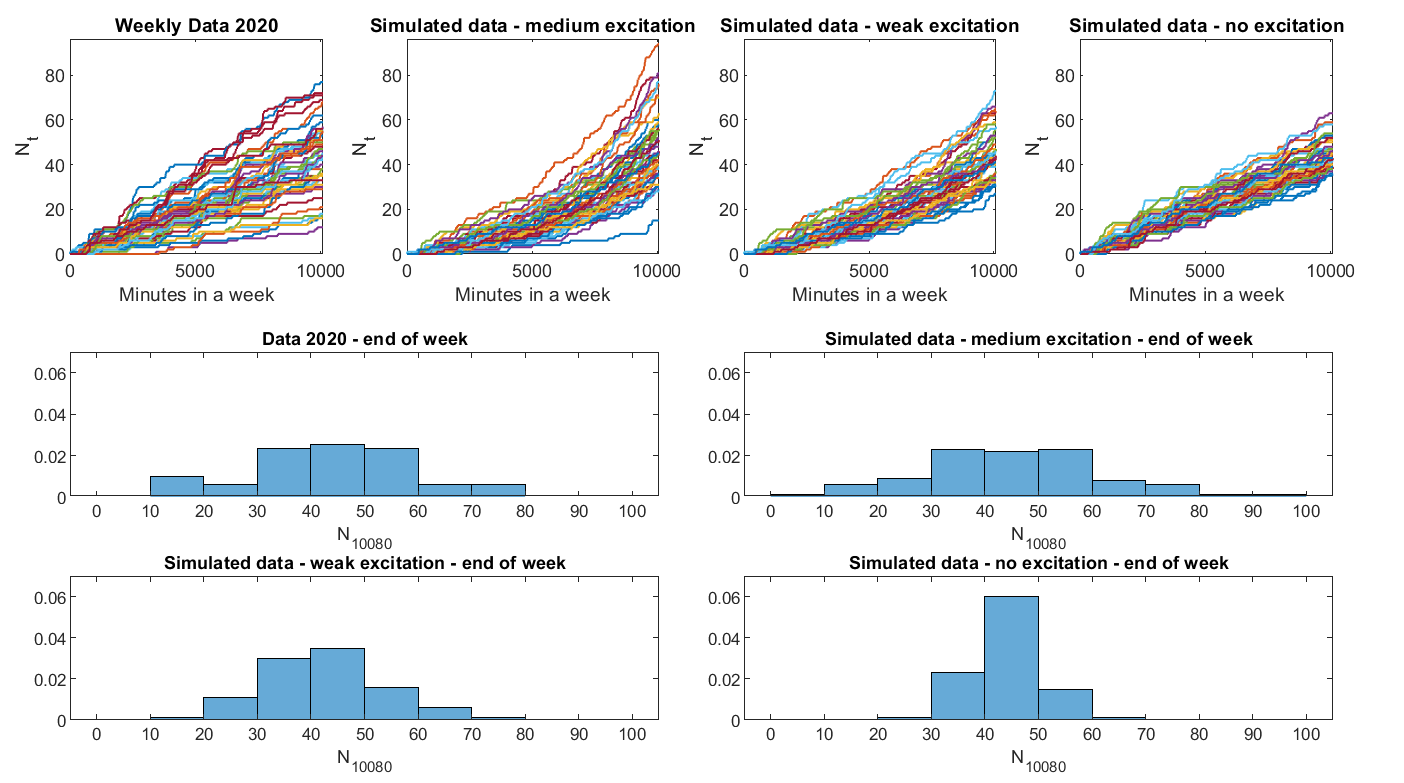}
\caption{Real data study for $\mu_1$, the time-constant excitation function. Comparison of temporal evolution of $N_{t}$ and the histograms for $N_{10080}$ with different choices of the weights to real data.}
\label{fig:realData11}
\end{figure}
			
Focusing on the temporal evolution of the number of accidents in one week, the increase of the curves given by data follows, roughly speaking, a linear pattern. The total number of accidents in a week ranges from 15 to 76 accidents per week, with the largest clustering around 45 accidents per week. Looking at the accident process from our model without excitation (top right), we also have the style of a linear increase, but we observe a significantly narrower range of $N_{10080}$, which lies between 30 and 60. This observation of over-dispersion has been mentioned in the introduction and underlines that a standard Poisson process without modification does not reproduce the real world accident process. Including the excitation shares, we see that the range widens and leads to outliers, especially for the medium excitation share. Note that on average all accident processes yield the same number of accidents per week, only their distribution and temporal evolution differ.
 
Furthermore, we observe a convex structure in the temporal evolution, representing more accidents on Fridays and Saturdays than at the beginning of the week. This effect is not present in the data and adds an unnatural property to the model when choosing $\mu_1$. Considering the histograms, the model with medium excitation share seems to be the most appropriate, since the number of accidents seems to be distributed almost symmetrically around some mean, with a minimum around 15 and a maximum around 80. As mentioned before, the model without additional excitation share is inappropriate and again underlines the necessity of our approach. However, due to the convex structure of the temporal evolution of $N_{t}$, the constant excitation function, is not an optimal choice.
	
Next, we consider $\mu_2$, which decreases exponentially in time, and present the corresponding illustration in Figure~\ref{fig:realData21}. The illustration of the evolution in time for the British data and without excitation share are the same as in Figure~\ref{fig:realData11}, except for the scaling of the vertical axis, which had to be adjusted due to the outliers in this model. Unlike before, the exponentially decaying excitation function creates more weeks with total weekly numbers of accidents around 20 and 30 and then also creates outliers, some of them resulting in values $N_{10080}>100$ which are significantly more accidents than can be found in the data. Furthermore, we observe a rather concave shaped temporal evolution of $N_{t}$ for both, the weak and medium excitation share. This effect is converse to the one obtained in Figure~\ref{fig:realData11}, but is also not represented in the British data. The histograms of $N_{10080}$ do not show a symmetric distribution, but rather a right-skewed distribution. Therefore, one can conclude that the exponentially decaying excitation function is also not appropriate for the modelling of the evolution of the number of accidents, given the British data set.
	
\begin{figure}[!t]
\centering
\includegraphics[width=\linewidth]{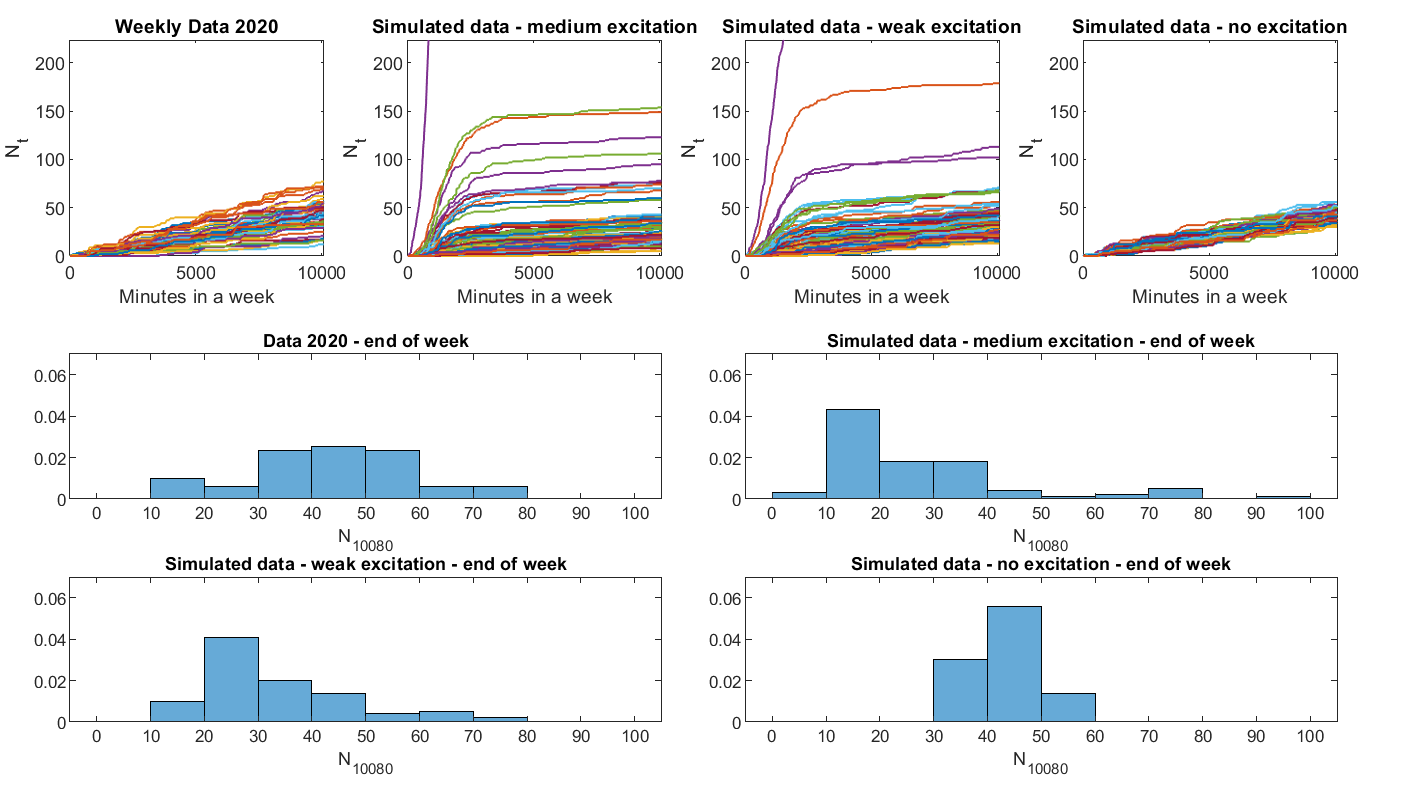}
\caption{Real data study for $\mu_2$, the exponentially decreasing excitation function. Comparison of temporal evolution of $N_{t}$ and the histograms for $N_{10080}$ with different choices of the weights to real data.}
\label{fig:realData21}
\end{figure}
	
Last, we consider $\mu_3$, which is a rational function of degree 1 in time, in Figure \ref{fig:realData31}. The upper row again shows the temporal evolution of the accident process $N_{t}$, which under these assumptions follows an approximately linear shape for each week. Increasing the share of the excitation function, the weekly total numbers lie in a wider range. For the weak share, the simulations are in good agreement with the real data. Additionally, the linear structure can also be found in real data. The histogram of the weak excitation share is not perfectly symmetric, but it is still close to the one based on the data set. Therefore, given the sinusoidally varying background intensity function, the choice of a rational function of degree 1 as the excitation intensity function seems appropriate to model a weekly accident process. 
	
\begin{figure}[!t]
\centering
\includegraphics[width=\linewidth]{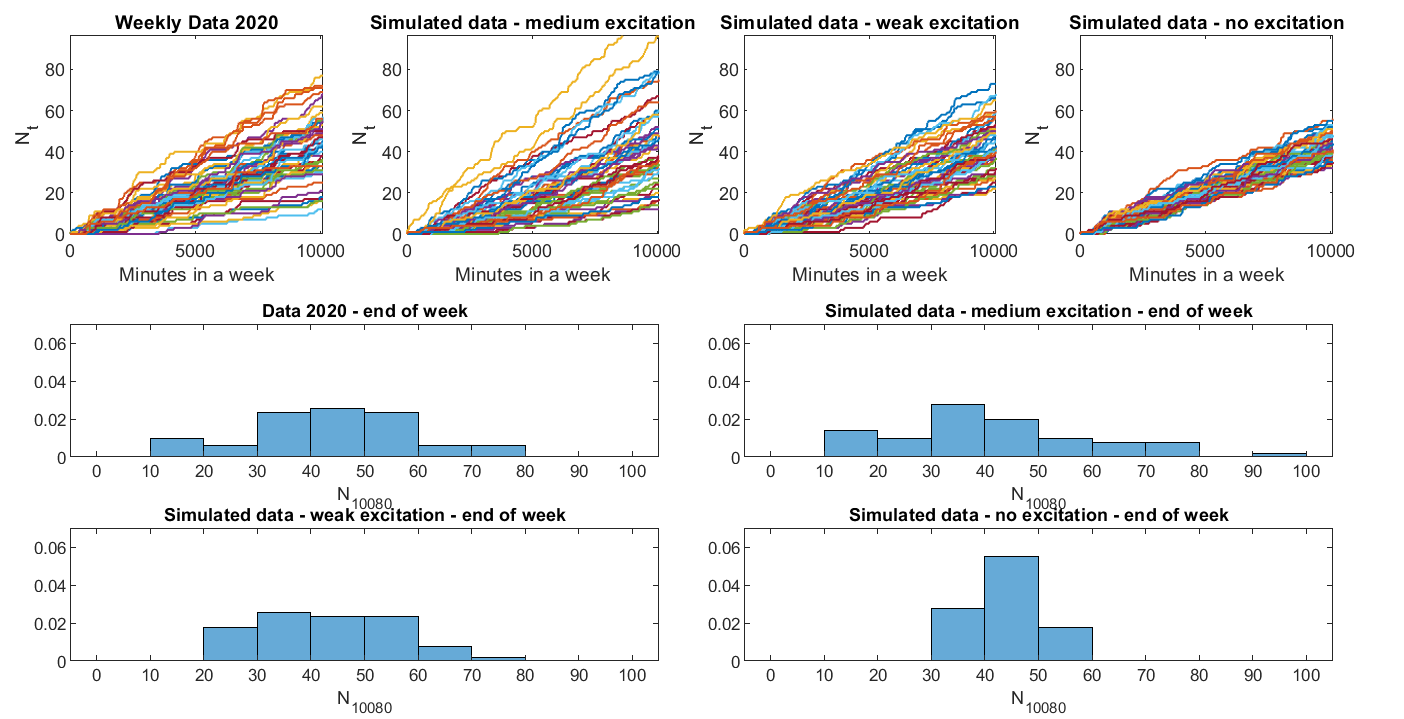}
\caption{Real data study for $\mu_3$, the polynomially decreasing excitation function. Comparison of temporal evolution of $N_{t}$ and the histograms for $N_{10080}$ with different choices of the weights to real data.}
\label{fig:realData31}
\end{figure}

The excitation share can be used to identify weeks with high and low accident risk. If there were few accidents at the beginning of the week, it is likely that there will also be fewer people traveling and causing accidents in the next few days. On the other hand, if there are a lot of accidents at times when there are a lot of accidents, it is likely that there will also be a lot of accidents during the rest of the week. We do not add additional information on the likeliness of accidents into the intensity function, but by construction the stochastic process self-adapts to these situations. Since in the data we do not observe significantly fewer or more accidents at the beginning or the end of the weeks, the accident risk should remain more or less constant. As the excitation share consists of the product of $N_{t}$ and $\mu(t)$ by a linear increase of $N_{t}$, $\mu(t)$ must compensate this increase to obtain a constant accident risk, which is done by a term proportional to $\frac{1}{t}$, see $\mu_3$. In the case of the constant $\mu_1$, accidents at the end of the week are overemphasized, since $N_{t}$ increases monotonically, and so does the corresponding accident risk. For exponentially decaying $\mu_2$, the effect is reversed. The excitation function decreases faster than $N_{t}$ increases, creating the concave shape of $N_t$ with too many accidents at the beginning of the week.
	
Summarising, for the particular chosen data set initially presented in Figure \ref{fig:motivation}, the sinusoidal choice of $\lambda$ and the rational function $\mu_3$ is proved to be appropriate. 

\section{Conclusions}
\label{sect:conclusions}
Motivated by real data for traffic accidents, we have derived a special stochastic process that considers its own history, proving both analytically and numerically that it mimics qualitatively well the typical time evolution of the number of accidents. Its investigation has allowed for a deep study also of additional features, noticeably the statistics of the times of occurrence of the accidents and of the times elapsing between two consecutive accidents. Numerical experiments have revealed further interesting phenomena, that go beyond the theoretical results.  

\bibliographystyle{plain}
\bibliography{biblio}

\newpage
\begin{appendices}

\section{Supplementary Material}
\subsection{The accident process in the case of~\texorpdfstring{$\boldsymbol{\lambda}$}{},~\texorpdfstring{$\boldsymbol{\mu}$}{} constant}
\label{sect:lambda_mu_const.f}
When $\lambda,\,\mu>0$ are constant the intensity $\lambda^\ast$ takes the special form~\eqref{eq:Hawkes.mu_const}. The process $\{N_t,\,t\geq 0\}$ becomes then a particular instance of a genuine Hawkes process. For this reason, it is interesting to address this case in some detail, considering furthermore that now $\lambda,\,\mu\not\in L^1(\R_+)$ and hence several results obtained so far may no longer hold.

First, we notice that $\Lambda(t)=\lambda t$, $M(t)=\mu t$, which provides the following representations of the mean $m$, energy $E$, and variance $\sigma^2$ of the distribution $f$:
\begin{align*}
	m(t) &= \frac{\lambda}{\mu}\left(e^{\mu t}-1\right), \\
	E(t) &= \frac{\lambda}{\mu^2}\left((\lambda+\mu)e^{2\mu t}-(2\lambda+\mu)e^{\mu t}+\lambda\right), \\
	\sigma^2(t) &= E(t)-m^2(t)=\frac{\lambda}{\mu}e^{\mu t}\left(e^{\mu t}-1\right).
\end{align*}
From here, we see that $m,\,E,\,\sigma^2\to +\infty$ when $t\to +\infty$, therefore in this case we expect a spreading of the distribution $f(\cdot,t)$ on $\N$ over time, with accidents becoming more and more numerous.

Thanks to the simplification brought by the constant coefficients $\lambda,\,\mu$, we are even able to provide the exact expression of $f(n,t)$ for every $n\in\N$ and every $t>0$:
\begin{theorem} \label{theo:f.exact}
	Assume $\lambda,\,\mu>0$ constant. Then
	$$ f(n,t)=\frac{\Gamma(\frac{\lambda}{\mu}+n)}{n!\,\Gamma(\frac{\lambda}{\mu})}e^{-\lambda t}\left(1-e^{-\mu t}\right)^n, \qquad n\in\N,\,t>0, $$
	where $\Gamma$ denotes the gamma function.
\end{theorem}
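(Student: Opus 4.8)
The plan is to proceed by induction on $n$, exploiting the recursive relationship~\eqref{eq:f.recurrence}, which simplifies considerably in the present case since $\Lambda(t)=\lambda t$ and $M(t)=\mu t$. Writing $a:=\frac{\lambda}{\mu}$ and $c_n:=\frac{\Gamma(a+n)}{n!\,\Gamma(a)}$, the claim reads $f(n,t)=c_n e^{-\lambda t}(1-e^{-\mu t})^n$. The base case $n=0$ is immediate from~\eqref{eq:f(0,t)}, since $c_0=1$ and $(1-e^{-\mu t})^0=1$ together give $f(0,t)=e^{-\lambda t}$.

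For the inductive step I would assume the formula for $n-1$ and insert it into~\eqref{eq:f.recurrence}. After cancelling the factor $e^{-\lambda s}$ coming from $f(n-1,s)$ against part of the exponential weight $e^{\Lambda(s)+M(s)n}=e^{(\lambda+\mu n)s}$, the entire computation reduces to evaluating the single integral
$$ I_n(t):=\int_0^t e^{\mu n s}\bigl(1-e^{-\mu s}\bigr)^{n-1}\,ds. $$
The only genuine computation occurs here, and the key observation is that the substitution $w=e^{\mu s}$, for which $ds=\frac{dw}{\mu w}$, $e^{\mu n s}=w^n$ and $1-e^{-\mu s}=\frac{w-1}{w}$, collapses the integrand to a pure polynomial:
$$ I_n(t)=\frac{1}{\mu}\int_1^{e^{\mu t}}(w-1)^{n-1}\,dw=\frac{(e^{\mu t}-1)^n}{\mu n}. $$

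Using $e^{-(\lambda+\mu n)t}(e^{\mu t}-1)^n=e^{-\lambda t}(1-e^{-\mu t})^n$, the prefactors left over from~\eqref{eq:f.recurrence} then assemble into $\frac{\lambda+\mu(n-1)}{\mu n}\,c_{n-1}\,e^{-\lambda t}(1-e^{-\mu t})^n$, and I would close the argument by recognising that $\frac{\lambda+\mu(n-1)}{\mu n}=\frac{a+n-1}{n}$ is precisely the ratio $c_n/c_{n-1}$ dictated by $\Gamma(a+n)=(a+n-1)\Gamma(a+n-1)$. This yields $f(n,t)=c_n e^{-\lambda t}(1-e^{-\mu t})^n$ and completes the induction. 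I do not anticipate a serious obstacle, since the triangular structure of~\eqref{eq:Boltz_f.strong} guarantees that~\eqref{eq:f.recurrence} determines $f(n,\cdot)$ uniquely from $f(n-1,\cdot)$, making the induction self-contained. As an alternative, one could verify directly that the proposed $f$ satisfies~\eqref{eq:Boltz_f.strong} together with $f(n,0)=\delta_0(n)$ (noting that $(1-e^{-\mu t})^n\to 0$ as $t\to 0^+$ for $n\geq 1$): differentiating the candidate and invoking the same identity $\Gamma(a+n)=(a+n-1)\Gamma(a+n-1)$ reproduces~\eqref{eq:Boltz_f.strong} after a short algebraic check, whence uniqueness of the solution gives the result.
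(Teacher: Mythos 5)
Your proof is correct, and it follows the same overall skeleton as the paper's: induction on $n$ via the recursion~\eqref{eq:f.recurrence}, with the base case read off from~\eqref{eq:f(0,t)}. The difference lies entirely in how the inductive integral is evaluated, and there your route is genuinely distinct and noticeably cleaner. The paper expands $\left(1-e^{-\mu s}\right)^n$ by the binomial theorem, integrates term by term, reassembles the coefficients via $\frac{1}{n!}\binom{n}{j}\frac{1}{n+1-j}=\frac{1}{(n+1)!}\binom{n+1}{j}$, and finally kills the spurious terms using $\sum_{j}\binom{n+1}{j}(-1)^j=0$. Your substitution $w=e^{\mu s}$ collapses the integrand to the polynomial $\frac{1}{\mu}(w-1)^{n-1}$ and gives $I_n(t)=\frac{(e^{\mu t}-1)^n}{\mu n}$ in one line, after which the identity $e^{-\mu nt}(e^{\mu t}-1)^n=(1-e^{-\mu t})^n$ and the ratio $c_n/c_{n-1}=\frac{a+n-1}{n}$ close the induction exactly as you describe; I checked the computation and it is right. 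What your approach buys is brevity and transparency (no combinatorial bookkeeping); what the paper's buys is essentially nothing extra here, so yours is arguably preferable. Your suggested alternative -- verifying directly that the candidate solves~\eqref{eq:Boltz_f.strong} with the correct initial data and appealing to uniqueness of the (triangular) ODE system -- is also sound and would work.
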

\begin{proof}
	For fixed $t>0$, we proceed by induction on $n$.
	
	For $n=0$ we get $f(0,t)=e^{-\lambda t}$, which coincides with~\eqref{eq:f(0,t)} in the present case of $\lambda$ constant.
	
	We now assume that the formula holds for a certain $n\in\N$ and we check it for $n+1$. Recalling~\eqref{eq:f.recurrence}, we obtain:
	\begin{align*}
		f(n+1,t) &= (\lambda+\mu n)e^{-(\lambda+\mu(n+1))t}\int_0^te^{(\lambda+\mu(n+1))s}f(n,s)\,ds \\
		&= (\lambda+\mu n)\frac{\Gamma(\frac{\lambda}{\mu}+n)}{n!\,\Gamma(\frac{\lambda}{\mu})}e^{-(\lambda+\mu(n+1))t}\int_0^te^{\mu(n+1)s}\left(1-e^{-\mu s}\right)^n\,ds, \\
		\intertext{whence, invoking the binomial theorem to develop $\left(1-e^{-\mu s}\right)^n$,}
		&= (\lambda+\mu n)\frac{\Gamma(\frac{\lambda}{\mu}+n)}{n!\,\Gamma(\frac{\lambda}{\mu})}e^{-(\lambda+\mu(n+1))t}\sum_{j=0}^n\binom{n}{j}(-1)^j\int_0^te^{\mu(n+1-j)s}\,ds \\
		&= (\lambda+\mu n)\frac{\Gamma(\frac{\lambda}{\mu}+n)}{n!\,\Gamma(\frac{\lambda}{\mu})}e^{-\lambda t}\sum_{j=0}^n\binom{n}{j}(-1)^j\frac{e^{-\mu jt}-e^{-\mu(n+1)t}}{\mu(n+1-j)} \\
		\intertext{and further, observing that $\frac{1}{n!}\binom{n}{j}\frac{1}{n+1-j}=\frac{1}{(n+1)!}\binom{n+1}{j}$,}
		&= (\lambda+\mu n)\frac{\Gamma(\frac{\lambda}{\mu}+n)}{(n+1)!\,\Gamma(\frac{\lambda}{\mu})}e^{-\lambda t}\sum_{j=0}^n\binom{n+1}{j}(-1)^j\frac{e^{-\mu jt}-e^{-\mu(n+1)t}}{\mu}. \\
		\intertext{Now we exploit the property of the gamma function that $\Gamma(z+1)=z\Gamma(z)$ for $z\in\mathbb{C}$ with $\Re\text{e}\,z>0$. In particular, letting $z=\frac{\lambda}{\mu}+n$ we have $\Gamma(\frac{\lambda}{\mu}+n)=\frac{\Gamma(\frac{\lambda}{\mu}+n+1)}{\frac{\lambda}{\mu}+n}$, whence}
		&= \frac{\Gamma(\frac{\lambda}{\mu}+n+1)}{(n+1)!\,\Gamma(\frac{\lambda}{\mu})}e^{-\lambda t}\sum_{j=0}^n\binom{n+1}{j}(-1)^j\left(e^{-\mu jt}-e^{-\mu(n+1)t}\right) \\
		&= \frac{\Gamma(\frac{\lambda}{\mu}+n+1)}{(n+1)!\,\Gamma(\frac{\lambda}{\mu})}e^{-\lambda t}\left[\sum_{j=0}^{n+1}\binom{n+1}{j}(-e^{-\mu t})^j\right. \\
		&\phantom{=} \left.-\sum_{j=0}^{n+1}\binom{n+1}{j}(-1)^j\cdot e^{-\mu(n+1)t}\right], \\
		\intertext{where we have extended the sums up to $n+1$ because the terms corresponding to $j=n+1$ cancel with one another. Since $\sum_{j=0}^{n+1}\binom{n+1}{j}(-1)^j=(1-1)^{n+1}=0$, we get finally}
		&= \frac{\Gamma(\frac{\lambda}{\mu}+n+1)}{(n+1)!\,\Gamma(\frac{\lambda}{\mu})}e^{-\lambda t}\left(1-e^{-\mu t}\right)^{n+1}
	\end{align*}
	and the thesis follows by induction.
\end{proof}

From Theorem~\ref{theo:f.exact}, invoking Stirling's formula $\Gamma(x+1)\sim\sqrt{2\pi x}\left(\frac{x}{e}\right)^x$ for $x\in\R$, $x\to +\infty$ and using it with $x=\frac{\lambda}{\mu}+n-1$, we obtain
$$ f(n,t)\sim\frac{n^{\frac{\lambda}{\mu}-1}}{\Gamma(\frac{\lambda}{\mu})}e^{-\lambda t}\left(1-e^{-\mu t}\right)^n \quad (n\to\infty), $$
whence we see that for every $t>0$ the tail of the distribution $f(\cdot,t)$ is slim owing to the exponential decay induced by the last factor in the expression above (notice that $0<1-e^{-\mu t}<1$ for all $t>0$). Thus, at every $t>0$ all statistical moments of $f(\cdot,t)$ are finite. At the same time, $f(n,t)\to 0$ when $t\to +\infty$ for every $n\in\N$, meaning that as time increases the probability mass spreads over the whole $\N$. As a result, the statistical moments of $f(\cdot,t)$ cannot be asymptotically bounded. In particular, the trends of $m$, $E$ examined before imply that the only asymptotically bounded moment is the zeroth order one.

\subsection{The accident times in the case of~\texorpdfstring{$\boldsymbol{\lambda}$}{},~\texorpdfstring{$\boldsymbol{\mu}$}{} constant}
\label{sect:aggregateInterTimes.constant_lambda_mu}
As we did in Section~\ref{sect:lambda_mu_const.f}, it is interesting to specialise the results about the accident times to the case in which the background and self-excitation intensities $\lambda$, $\mu$ are positive constants.

In particular, combining Theorem~\ref{theo:f.exact} and~\eqref{eq:gk} we obtain that the pdf of the accident time $T_k$ reads
$$ g_k(t)=\frac{(\lambda+\mu(k-1))\Gamma(\frac{\lambda}{\mu}+k-1)}{(k-1)!\,\Gamma(\frac{\lambda}{\mu})}e^{-\lambda t}\left(1-e^{-\mu t}\right)^{k-1},
\qquad t>0,\ k\in\N,\,k\geq 1. $$

On the other hand, formula~\eqref{eq:hk} for the pdf of the interaccident time $\Delta{T}_k$ becomes
\begin{align}
	h_k(\tau) &= (\lambda+\mu(k-1))e^{-(\lambda+\mu(k-1))\tau}\int_0^{+\infty}(\lambda+\mu(k-2))f(k-2,s)\,ds, \nonumber \\
	\intertext{i.e., owing to~\eqref{eq:gk} and Proposition~\ref{prop:gk.pdf},}
	&= (\lambda+\mu(k-1))e^{-(\lambda+\mu(k-1))\tau}\int_0^{+\infty}g_{k-1}(s)\,ds \nonumber \\
	&= (\lambda+\mu(k-1))e^{-(\lambda+\mu(k-1))\tau}, \qquad \tau\geq 0,\ k\in\N,\,k\geq 1.
	\label{eq:lambda_mu_const.hk}
\end{align}
We conclude that when $\lambda,\,\mu>0$ are constant the $k$-th interaccident time $\Delta{T}_k$ is \textit{exponentially distributed} with parameter $\lambda+\mu(k-1)$. Nevertheless, also in this case the process $\{N_t,\,t\geq 0\}$ is not a classical renewal (Poisson) process, because the $\Delta{T}_k$'s are still not identically distributed as the parameters of their exponential laws depend on $k$.

Finally, concerning the case of the aggregate interaccident time, from Section~\ref{sect:lambda_mu_const.f} we deduce
$$ m(\cT)=\frac{\lambda}{\mu}\left(e^{\mu\cT}-1\right), $$
therefore~\eqref{eq:h}, together with Theorem~\ref{theo:f.exact} and~\eqref{eq:lambda_mu_const.hk}, implies
\begin{align}
	\label{eq:h_lambdaMuConst}
	\begin{split}
		h(\tau) &= \frac{\mu}{\lambda\left(e^{\mu\cT}-1\right)}\sum_{k=1}^\infty h_k(\tau)(\lambda+\mu(k-1))\int_0^\cT f(k-1,t)\,dt \\
		&= \frac{e^{-\lambda\tau}}{\left(e^{\mu\cT}-1\right)\Gamma(\frac{\lambda}{\mu}+1)}\sum_{k=0}^\infty\frac{(\lambda+\mu k)^2\Gamma(\frac{\lambda}{\mu}+k)}{k!}e^{-\mu k\tau}
		\int_0^\cT e^{-\lambda t}\left(1-e^{-\mu t}\right)^k\,dt.
	\end{split}
\end{align}
As particular cases, which allow for deeper analytical computations, we consider:
\begin{enumerate}[label=\roman*)]
	\item If $\lambda=\mu$, using that $\Gamma(2)=1$, $\Gamma(k)=(k-1)!$, and that the integral can be easily computed by taking $1-e^{-\mu t}$ as new integration variable, the expression above simplifies as
	$$ h(\tau)=\frac{\mu}{e^{\mu\cT}-1}\sum_{k=1}^\infty k\left[e^{-\mu\tau}\left(1-e^{-\mu\cT}\right)\right]^k. $$
	Letting $x:=e^{-\mu\tau}(1-e^{-\mu\cT})$, which is a number comprised in $(0,\,1)$ for every $\cT>0$ and $\tau\geq 0$, the infinite sum can be computed explicitly as
	$$ \sum_{k=1}^\infty kx^k=x\sum_{k=1}^\infty kx^{k-1}=x\cdot\frac{d}{dx}\sum_{k=1}^\infty x^k=\frac{x}{(1-x)^2}, $$
	whence finally
	$$ h(\tau)= \frac{\mu}{e^{\mu\cT}}\cdot\frac{e^{-\mu\tau}}{\left[1+\left(e^{-\mu\cT}-1\right)e^{-\mu\tau}\right]^2}, \qquad \tau\geq 0. $$
	\item If $\lambda>0$ and $\mu=0$ (no self-excitation), from~\eqref{eq:lambda_mu_const.hk} we deduce that every $\Delta{T}_k$ is exponentially distributed with parameter $\lambda$, hence in particular that the $h_k$'s are independent of $k$. As a result, also $\Delta{T}_K$ turns out to be exponentially distributed with parameter $\lambda$:
	\begin{align}
		\label{eq:aggregate_h_exponential}
		h(\tau)=\lambda e^{-\lambda\tau}, \qquad \tau\geq 0.     
	\end{align}
\end{enumerate}

\subsection{Numerical experiments}
\label{sect:numerics}
This section provides a numerical study of the results obtained in Sections~\ref{sect:model.accident_distr} and~\ref{sect:acc_times}. Throughout this section, we use the following discretisations: For the distribution function $f$ describing the probability of $n$ accidents at time $t$ defined in~\eqref{eq:distributionFunction} we consider a time grid $\{t_j\}_{j \in \mathbb{N}}$ with constant stepsize $\Delta t = t_{j+1}-t_j$. The temporal evolution for fixed $n$ is described by the ODE in~\eqref{eq:Boltz_f.strong}, which is discretised using an explicit Euler scheme:
\begin{align*}
    f_{n,j+1} = f_{n,j} + \Delta t \left((\lambda(t_j) + \mu(t_j)(n-1))f_{n-1,j} - (\lambda(t_j) + \mu(t_j)n)f_{n,j}  \right),
\end{align*}
for a sufficiently small stepsize $\Delta t>0$.
The mean number of accidents given in~\eqref{eq:m.closedForm} and of the energy given in~\eqref{eq:E.closedForm} are discretised applying a rectangular rule with stepsize $\Delta t = \tfrac{1}{100}$ to the integral. The same technique is used for the computation of $\Lambda(t)$ and $M(t)$ defined in~\eqref{eq:Lambda_M}. The variances can be derived by computing the difference between the energy and the squared mean.

\subsubsection{Numerical study for different choices of~\texorpdfstring{$\boldsymbol{\lambda}$}{} and~\texorpdfstring{$\boldsymbol{\mu}$}{}}
\label{sec:numStudy}
We provide examples for different choices for $\lambda(t)$ and $\mu(t)$ as defined in \eqref{eq:our_lambda*}. In particular, we study the form of the distribution function in \eqref{eq:distributionFunction} for fixed numbers of accidents $n$ and times $t$. We also investigate the evolution in time of the mean and the variance of accidents.

The accident evolution is significantly different when either $\lambda(t)$ or $\mu(t)$ belongs or not to $L^1(\R_+)$ (cf. Theorem~\ref{theo:moments_bounded}, Corollary~\ref{cor:slimTails}, Proposition~\ref{prop:gk.pdf} and Lemma~\ref{lem:gIsPdf}). For the background intensity $\lambda$ we consider two different choices: $\lambda(t)=0.8(\sin{t}+1)$ (marked in blue), which is non-integrable, and $\lambda(t) = 4e^{-\sqrt{t}}$ (marked in red), which is integrable. 

\begin{figure}[!t]
    \centering
	\includegraphics[width=\linewidth]{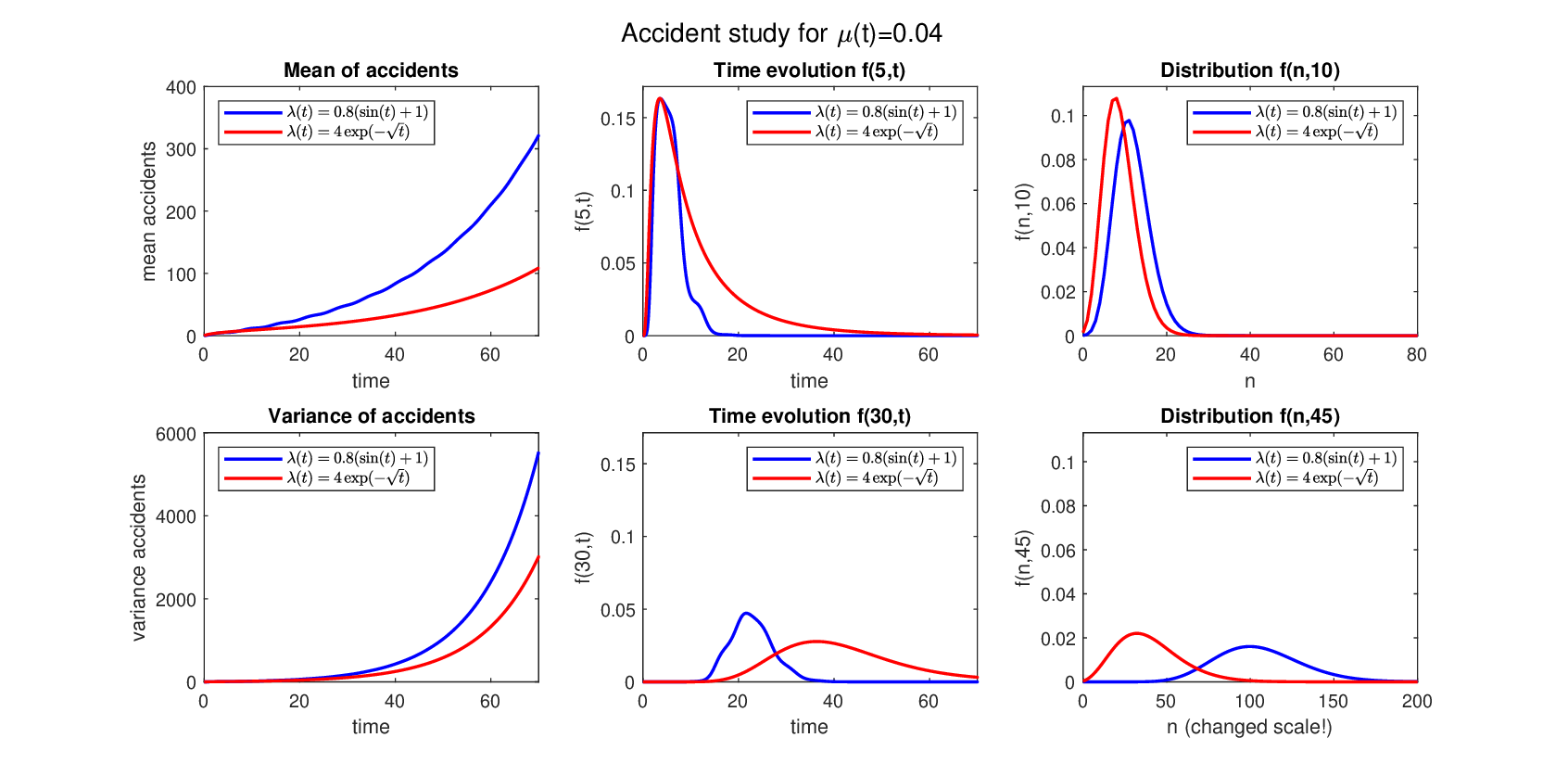}
	\caption{The temporal evolution of the mean and variance of the number of accidents (first column) together with the accident distribution function for $\mu(t) =0.04$ and two different choices of $\lambda(t)$, for fixed $n$ (second column) and fixed $t$ (third column).}
	\label{fig:accidentStudy1}
\end{figure}	

For the excitation intensity, as a representative of a non-integrable function on the positive real axis we consider in Figure~\ref{fig:accidentStudy1} a constant $\mu$. The rationale of this choice is that previous accidents have a constant-in-time impact on the probability of further accidents. This is different from a standard Hawkes process, where typically the past history of the process has a lower and lower impact as time progresses, see e.g.,~\cite{Goettlich.2024}. The panels of the first column of Figure~\ref{fig:accidentStudy1} show the temporal evolution of the mean and the variance of the number of accidents. For both the sinusoidal (blue) and the exponentially decaying (red) background intensity function, the mean and the variance increase in time. In particular, also in the second case, where the background intensity decays to zero, the non-decaying self-excitation intensity drives globally the accident evolution.

The panels of the second column consider instead the distribution function~\eqref{eq:distributionFunction} for a fixed number of accidents $n=5$ (upper panel) and $n=30$ (lower panel). With the sinusoidal background intensity (blue), reaching $n=30$ accidents takes much longer than reaching $n=5$ accidents with also more uncertainty in the latter case, as the lower peak of the distribution function suggests. With the exponentially decaying background intensity (red), $f(30,t)$ is a widely spread function.
	
In the third column we fix the time to $t=10$ (upper panel) and $t=45$ (lower panel) and examine the probability of the number of accidents occurred up to $t$. For the sinusoidal background intensity, more accidents are likely compared to the exponentially decaying background intensity, which is consistent with the observations in the first column.

\begin{figure}[!t]
    \centering
	\includegraphics[width=\linewidth]{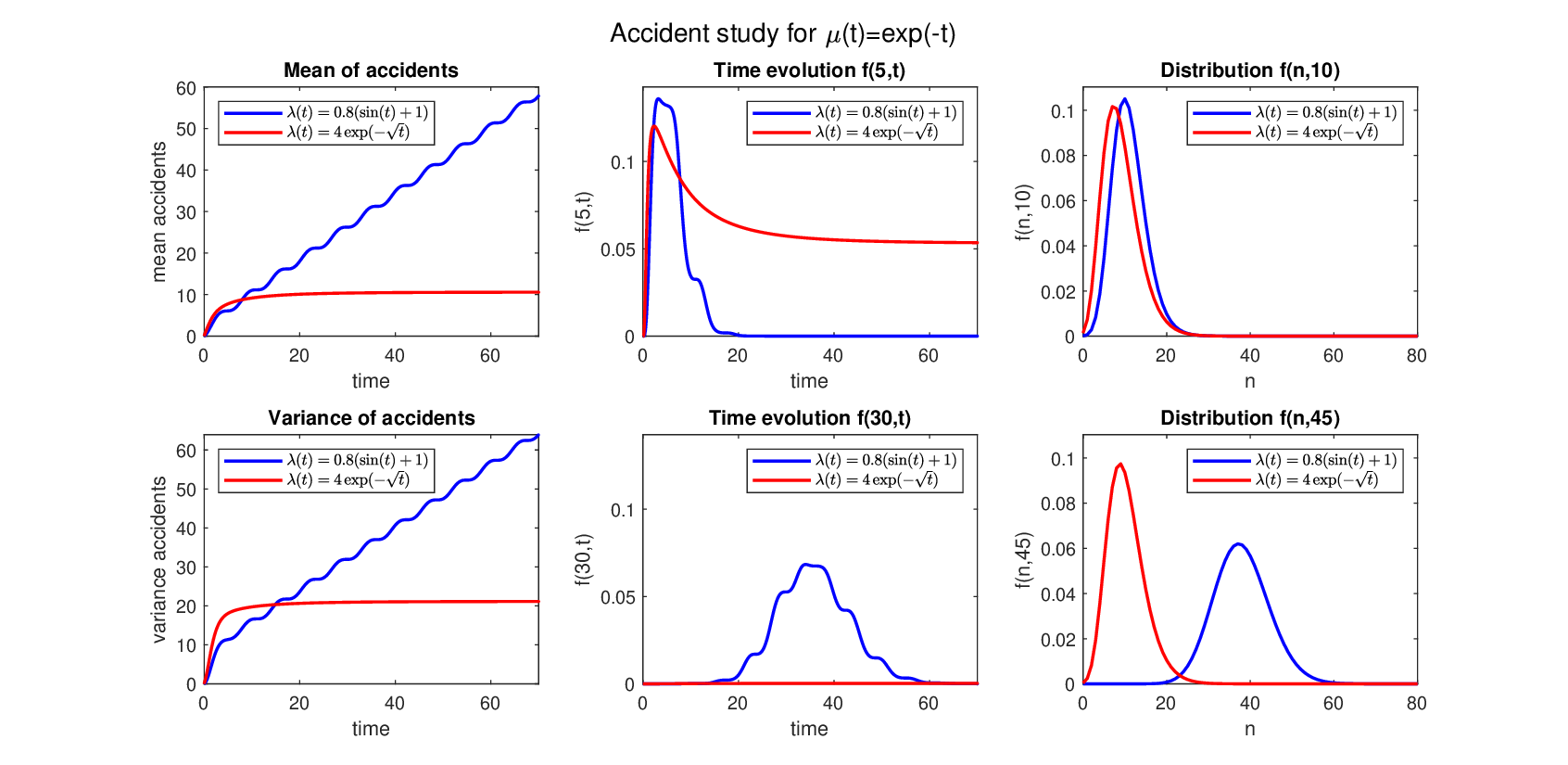}
	\caption{The temporal evolution of the mean and variance of the number of accidents (first column) together with the accident distribution function for $\mu(t) =e^{-{t}}$ and two different choices of $\lambda(t)$, for fixed $n$ (second column) and fixed $t$ (third column).}
	\label{fig:accidentStudy2}
\end{figure}

In Figure~\ref{fig:accidentStudy2} we repeat the same numerical experiments choosing the $L^1$-self-excitation intensity $\mu(t)=e^{-t}$. In the first column, with the sinusoidal background intensity we observe a growth of the mean number of accidents that is linear on average with superimposed sinusoidal fluctuations and no longer convex as in the first example. With the exponentially decaying background intensity, the mean increases and then approaches some maximum value close to $10$. Similar observations can be made for the variance, thereby underpinning the theoretical result on the uniform boundedness of the moments of $f$ for $\lambda,\,\mu\in L^1(\R_+)$, cf. Theorem~\ref{theo:moments_bounded}.

In the second column, with the exponentially decaying background intensity the distribution function $f(5,\cdot)$ of $n=5$ accidents is non-vanishing as $t\to +\infty$. As the probability of further accidents vanishes as $t\to +\infty$, a strictly positive probability of having exactly $n=5$ accidents remains. Conversely, for $n=30$ accidents the curve stays very close to zero for all times, as $30$ accidents are very unlikely to happen. This is also emphasised in the third column, where one can observe that the red curve moves only slightly rightwards. In particular, Corollary~\ref{cor:slimTails} predicts slim tails of $f(n,10)$ and $f(n,45)$, which will be further elaborated in Section~\ref{sec:boundsExample}.
	
\begin{figure}[!t]
    \centering
	\includegraphics[width=\linewidth]{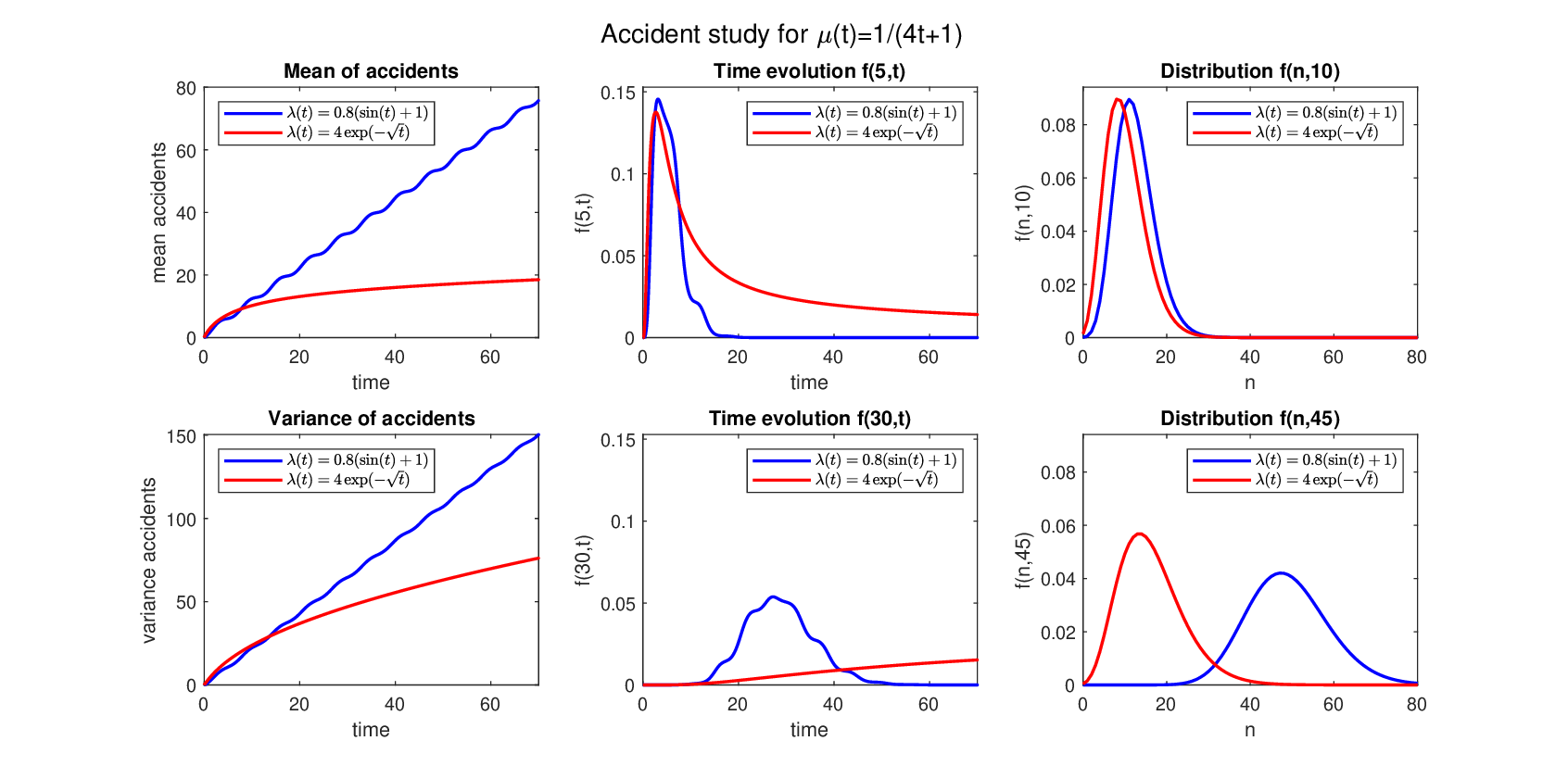}
	\caption{The temporal evolution of the mean and variance of the number of accidents (first column) together with the accident distribution function for $\mu(t) = \frac{1}{4t+1}$ and two different choices of $\lambda(t)$, for fixed $n$ (second column) and fixed $t$ (third column).}
	\label{fig:accidentStudy3}
\end{figure}

As a last example, in Figure~\ref{fig:accidentStudy3} we present the borderline case of a self-excitation intensity $\mu\notin L^1(\R_+)$ that is however close to being integrable, in particular $\mu(t)=\frac{1}{4t+1}$. Although this $\mu$ is non-integrable like the one considered in Figure~\ref{fig:accidentStudy1}, the resulting accident process resembles rather that of~Figure \ref{fig:accidentStudy2}. However, unlike Figure \ref{fig:accidentStudy2}, the mean number of accidents grows unboundedly in time and furthermore $f(30,t)$ with the exponentially decaying background intensity (red) does not stay close to zero for all times.
	
\subsubsection{Validation of the bounds on the distribution function}
\label{sec:boundsExample}
\begin{figure}[!t]
    \centering
	\includegraphics[width=0.97\linewidth]{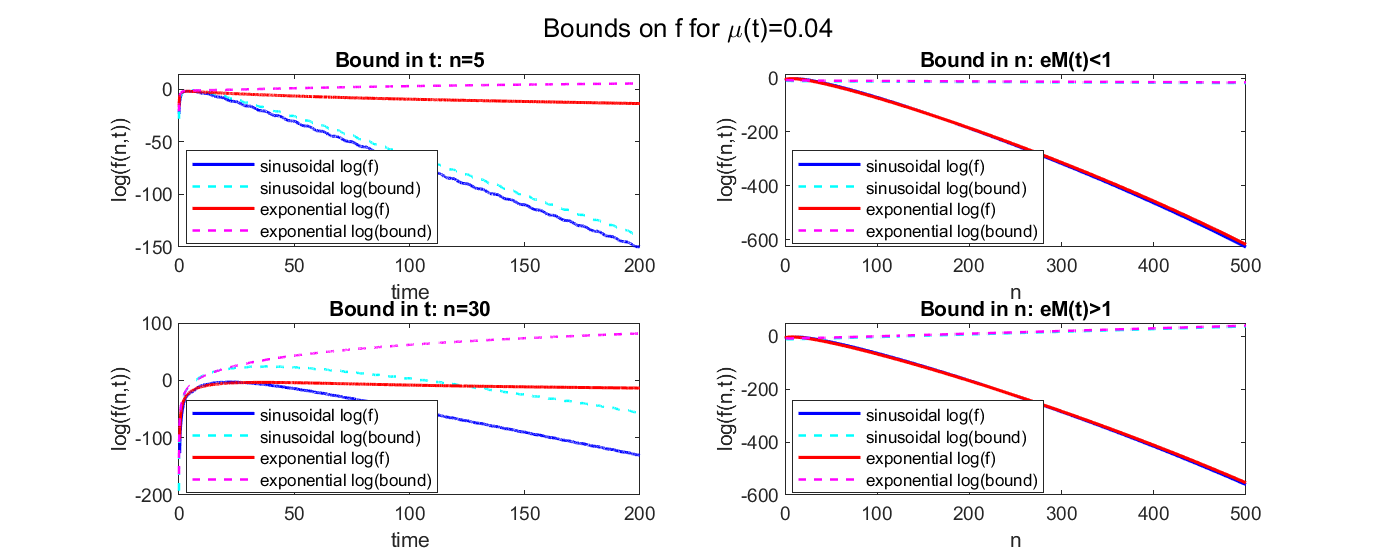}
	\caption{Comparison of the distribution function in time (left) and the number of accidents (right) for $\mu(t) = 0.04$ together with bounds \eqref{eq:f.bound} (left) and \eqref{eq:bound.f.Stirling} (right).}
	\label{fig:bounds1}
\end{figure}
\begin{figure}[!t]
    \centering
	\includegraphics[width=0.97\linewidth]{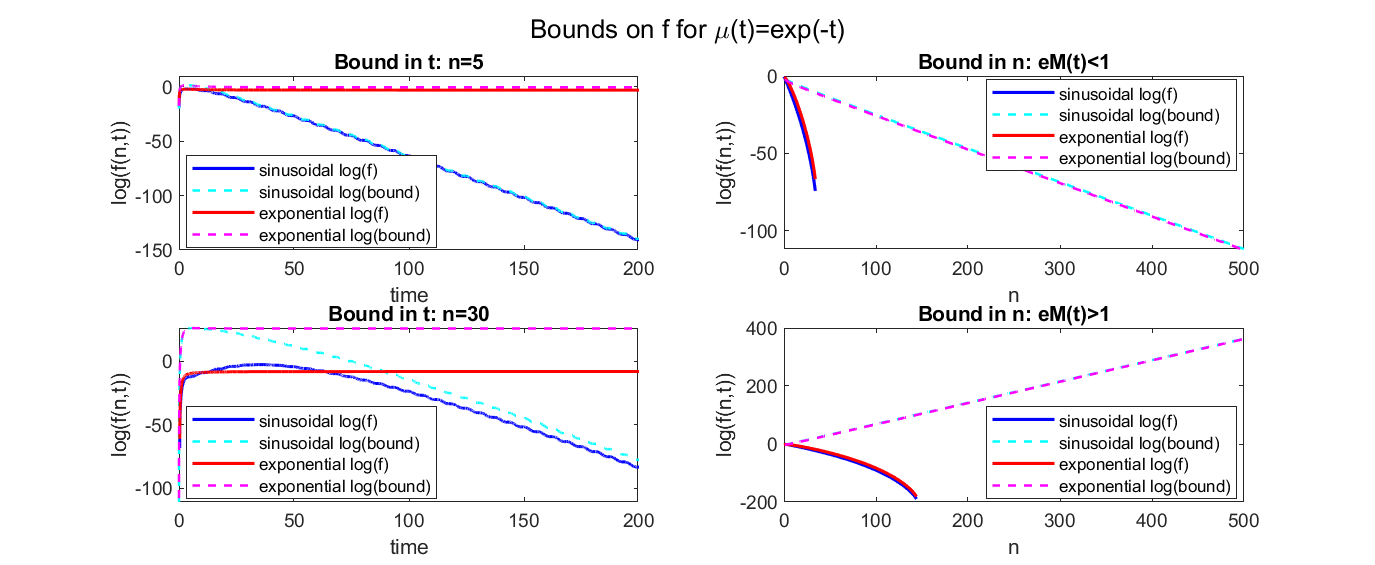}
	\caption{Comparison of the distribution function in time (left) and the number of accidents (right) for $\mu(t)=e^{-t}$ together with bounds \eqref{eq:f.bound} (left) and \eqref{eq:bound.f.Stirling} (right).}
	\label{fig:bounds2}
\end{figure}

This section aims to show the quality of the estimates obtained in Section~\ref{sec:f.bounds}. For this, we refer to the two first examples of the previous section with the different self-excitation intensities $\mu$, each combined with the two background intensities $\lambda$ seen before. We anticipate that, for fixed $n$, the bound established in Theorem~\ref{theo:f.bound} and~\eqref{eq:f.bound} turns out to be numerically a good estimate in time of the tail of the distribution function. The analysis is carried out in the left panels of Figures~\ref{fig:bounds1},~\ref{fig:bounds2}. The bounds are plotted in dashed lines in cyan (for the sinusoidal $\lambda$) and magenta (for the exponential $\lambda$). The vertical axis shows the logarithmic values of $f$ and its bounds, in order to better illustrate the tail behaviour. In the right panels of the same figures we plot instead the bound in $n$ for two fixed time instants, also on the logarithmic scale. Instead of using the bound from Theorem~\ref{theo:f.bound} and~\eqref{eq:f.bound}, here we use the approximate description of the bound developed in~\eqref{eq:bound.f.Stirling}, which is more intuitive than~\eqref{eq:f.bound}.

Figure~\ref{fig:bounds1} refers to the case of the constant self-excitation intensity $\mu(t)=0.04$. As also presented in Figure \ref{fig:accidentStudy1}, for fixed $n$ we have diminishing values of $f(n,\cdot)$ when $t\to +\infty$, which is accompanied by the diminishing bound for the sinusoidal case (blue). Considering the bound in~\eqref{eq:bound.f.Stirling} for fixed time $t$ on the right of Figure \ref{fig:bounds1}, we choose $t=9.09$ to obtain $eM(t)<1$ for both background intensities. Although the bounds in cyan and magenta decrease, they describe only loosely the tail behaviour for $n\to\infty$, as their decay is much too slow. For $t=10.19$ we have $eM(t)>1$ and increasing bounds, however the probability of having many accidents in a finite time still decreases as $n\to\infty$.

For the exponentially decaying self-excitation intensity $\mu(t)=e^{-t}$ we provide the analogous illustrations of the bounds in Figure~\ref{fig:bounds2}. Choosing $\lambda(t)=4e^{-\sqrt{t}}$ the observations for the bound in $t$ are similar to the example considered in Figure~\ref{fig:bounds1}. Conversely, sharper bounds are observed with this $\lambda$ (magenta) for fixed numbers of accidents $n=5$ and $n=30$. Considering the bound in $n$, Corollary~\ref{cor:slimTails} states slim tails of the distribution function $f(\cdot,t)$ for fixed $t$, due to the integrability of $\lambda$ and $\mu$. This is also observed in Figure~\ref{fig:bounds2}. For $t=0.35$ condition $eM(t)<1$ holds in both examples and the bounds are decreasing but at a significantly lower rate than the distribution functions $f$ itself. Conversely, for $t=1.45$ we are in the case $eM(t)>1$ and the bound is even increasing.

The bound behaviour in the case of the self-excitation intensity $\mu(t)=\frac{1}{4t+1}$ are structurally very similar to the presented examples.

\subsubsection{Aggregate interaccident time}
\label{sect:num.Intermediate}
Following the discussion of Section~\ref{sect:aggregateInterTimes}, we provide a comparison of the aggregate interaccident time obtained by a simulation of $\{N_t,\,t\geq 0\}$ using a thinning procedure~\cite{Ogata1981} and by the pdf $h$ defined in~\eqref{eq:def.h},~\eqref{eq:h}. As derived in~\eqref{eq:aggregate_h_exponential}, with a constant background intensity $\lambda$ and without self-excitation the aggregate interaccident time is exponentially distributed. For this reason, we always compare the results to the pdf of the best-fit exponential distribution. This allows us to evaluate how close the obtained distributions are to the most basic case. The distributions depend on the time horizon, which is chosen as $\cT=80$ (see e.g.,~\eqref{eq:h}).

\begin{figure}[!t]
    \centering
    \includegraphics[width=0.9\linewidth]{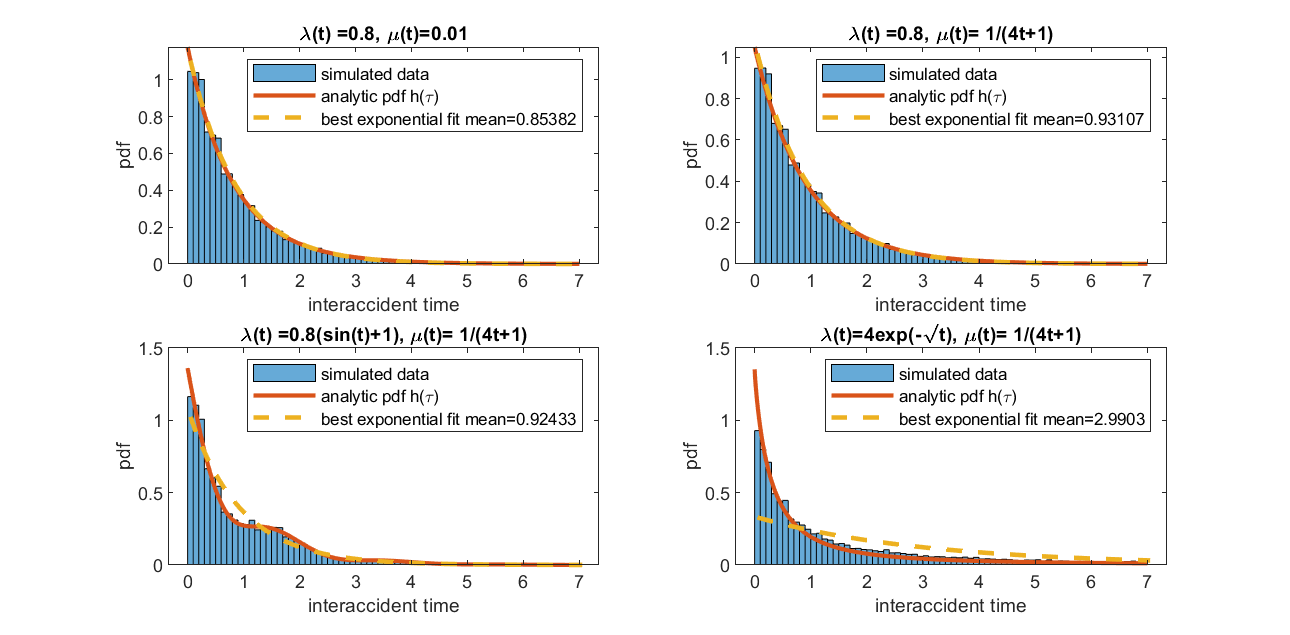}
    \caption{Histogram and pdf for the aggregate interaccident times for different choices of $\lambda$ and $\mu$ compared to the pdf of a best-fit exponential distribution.}
    \label{fig:aggregateIntertimes}
\end{figure}

Figure~\ref{fig:aggregateIntertimes} presents four examples of the pdf of the aggregate interaccident time for different choices of $\lambda$ and $\mu$. The illustration to the top left shows constant $\lambda(t)=0.08$ and $\mu(t)=0.01$, for which in Section~\ref{sect:aggregateInterTimes.constant_lambda_mu} and in~\eqref{eq:h_lambdaMuConst} we obtained an explicit form of $h$. The exponential distribution that fits the accident process best is very close to $h$, which allows us to conclude that for constant $\lambda$ and $\mu$ as well as a sufficiently large time horizon the distribution of the aggregate interaccident times is close to an exponential distribution. 

A similar conclusion can be drawn from the illustration in the top right of Figure~\ref{fig:aggregateIntertimes}, where $\lambda$ is unchanged while $\mu(t)=\frac{1}{4t+1}$. The distribution is different from the first one, but can still be approximated quite well by an exponential distribution. This is no longer true for the examples in the bottom row of Figure~\ref{fig:aggregateIntertimes}, when considering a time-varying background intensity. The histogram and the pdf $h$ differ significantly from the exponential best-fit. In the last example (bottom right), we choose $\lambda(t)=4\exp(-\sqrt{t})\in L^1(\mathbb{R}_+)$, which also leads to a pdf significantly different from an exponential distribution. The reason here is that the probability of further accidents vanishes as $t\to +\infty$ (see Figure~\ref{fig:accidentStudy3}) leading to a very long interaccident time. Therefore, the pdf in this example has a stronger tail than in the previous examples and therefore it does not fit an exponential distribution.
\end{appendices}
\end{document}